\crefname{linearprogram}{linear program}{linear programs}
\crefname{ineq}{inequality}{inequalities}
\newtheorem{proposition}{Proposition}
\colorlet{color1}{Dark2-D}
\colorlet{color2}{Dark2-E}
\colorlet{color3}{Dark2-C}
\colorlet{color4}{Dark2-B}
\newtheorem{theorem}{Theorem}
\newtheorem{lemma}[theorem]{Lemma}
\theoremstyle{definition}
\newtheorem{definition}{Definition}
\newtheorem{example}{Example}
\theoremstyle{remark}
\newtheorem{remark}{Remark}
\newcommand{\etal}{et al.\xspace}
\newcommand{\eg}{e.g.,\xspace}
\newcommand{\codename}{convertible code\xspace}
\newcommand{\codenames}{\codename{}s\xspace}
\newcommand{\mds}{MDS\xspace}
\newcommand{\combination}{merge\xspace}
\newcommand{\combinationregime}{\combination\ regime\xspace}
\newcommand{\accesscost}{access cost\xspace}
\newcommand{\bw}{bandwidth\xspace}
\newcommand{\bwcost}{bandwidth cost\xspace}
\newcommand{\bandwidthcost}{bandwidth cost\xspace}
\newcommand{\convbw}{conversion bandwidth\xspace}
\newcommand{\accessoptimal}{access-optimal\xspace}
\newcommand{\bandwidthoptimal}{bandwidth-optimal\xspace}
\newcommand{\bwoptimal}{\bandwidthoptimal}
\newcommand{\stripe}{codeword\xspace}
\newcommand{\stripes}{\stripe{}s\xspace}
\newcommand{\regimeone}{Regime 1\xspace}
\newcommand{\regimetwo}{Regime 2\xspace}
\newcommand{\infograph}{information flow graph\xspace}
\newcommand{\subsyms}{subsymbols\xspace}
\newcommand{\piggyfw}{Piggybacking framework\xspace}
\newcommand{\V}[1]{\mathbf{#1}}
\newcommand{\Field}[1]{\mathbb{F}_{#1}}
\newcommand{\Code}{\mathcal{C}}
\newcommand{\convertible}[2]{$({#1},{#2})$-convertible\xspace}
\newcommand{\convertibleDefault}{\convertible{\Fn}{\Fk}}
\newcommand{\accessConvertible}[2]{$({#1},{#2})$-access-optimally convertible\xspace}
\newcommand{\accessConvertibleDefault}{\accessConvertible{\Fn}{\Fk}}
\newcommand{\bwConvertible}[2]{$({#1},{#2})$-bandwidth-optimally convertible\xspace}
\newcommand{\bwConvertibleDefault}{\bwConvertible{\Fn}{\Fk}}
\newcommand{\Initial}{I}
\newcommand{\Final}{F}
\newcommand{\ICode}{{\Code^{\Initial}}}
\newcommand{\FCode}{{\Code^{\Final}}}
\newcommand{\Indices}{\mathcal{I}}
\newcommand{\ParamCombCode}[4]{\(({#1},\allowbreak{#2};\allowbreak{#3},\allowbreak{#4})\) \codename}
\newcommand{\ParamCombCodeDefault}{\ParamCombCode{\In}{\Ik}{\Fn}{\Cs\Ik}}
\newcommand{\ParamCode}[4]{\(({#1},\allowbreak{#2};\allowbreak{#3},\allowbreak{#4})\) \codename}
\newcommand{\MdsParamCode}[4]{\(({#1},\allowbreak{#2};\allowbreak{#3},\allowbreak{#4})\) MDS \codename}
\newcommand{\ParamCodeDefault}{\ParamCode{\In}{\Ik}{\Fn}{\Fk}}
\newcommand{\MdsParamCodeDefault}{\MdsParamCode{\In}{\Ik}{\Fn}{\Fk}}
\newcommand{\Partition}{\mathcal{P}}
\newcommand{\IPart}{\Partition_{\Initial}}
\newcommand{\FPart}{\Partition_{\Final}}
\newcommand{\Set}[2]{{P^{#1}_{#2}}}
\newcommand{\Iset}[1]{{\Set{\Initial}{#1}}}
\newcommand{\Fset}[1]{{\Set{\Final}{#1}}}
\newcommand{\In}{n^{\Initial}}
\newcommand{\Ik}{k^{\Initial}}
\newcommand{\Ir}{r^{\Initial}}
\newcommand{\Is}{\lambda^{\Initial}}
\newcommand{\Fn}{n^{\Final}}
\newcommand{\Fk}{k^{\Final}}
\newcommand{\Fr}{{r^{\Final}}}
\newcommand{\Fs}{\lambda^{\Final}}
\newcommand{\Irt}{\widetilde{r}^{\Initial}}
\newcommand{\Frt}{\widetilde{r}^{\Final}}
\newcommand{\NNodes}{M}
\newcommand{\Cs}{\varsigma}
\newcommand{\size}{M}
\newcommand{\blocksize}{B}
\newcommand{\Msg}{\V{m}}
\newcommand{\Ibase}{\ICode'}
\newcommand{\puncIbase}{\ICode''}
\newcommand{\Fbase}{\FCode'}
\newcommand{\Ivcode}{\ICode}
\newcommand{\Fvcode}{\FCode}
\newcommand{\IPP}[3]{\V{#2}^{\scaleto{(#3)}{6pt}} \V{p}_{#1}^I}
\newcommand{\IPPA}[3]{\textcolor{color4}{{} +  \V{#2}^{\scaleto{(#3)}{6pt}} \V{p}_{#1}^I}}
\newcommand{\IPPB}[3]{\textcolor{color5}{{} +  \V{#2}^{\scaleto{(#3)}{6pt}} \V{p}_{#1}^I }}
\newcommand{\FPP}[2]{\V{#2}\V{p}_{#1}^F}
\newcommand{\FPPA}[2]{\textcolor{color4}{\V{#2}\V{p}_{#1}^F}}
\newcommand{\FPPB}[2]{\textcolor{color5}{\V{#2}\V{p}_{#1}^F }}
\newcommand{\unchangedv}[3]{u_{{#1},{#2},{#3}}}
\newcommand{\retiredv}[2]{v_{{#1},{#2}}}
\newcommand{\newvs}[2]{w_{{#1},{#2}}}
\newcommand{\sourcev}[1]{s_{#1}}
\newcommand{\sinkvs}[1]{t_{#1}}
\newcommand{\centralv}{c}
\newcommand{\unchangedvset}[2]{\mathcal{U}_{{#1},{#2}}}
\newcommand{\retiredvset}[1]{\mathcal{R}_{#1}}
\newcommand{\newvsets}[1]{\mathcal{N}_{#1}}
\newcommand{\dl}[1]{\beta\!\left({#1}\right)}
\newcommand{\Ssize}[1]{\tilde{r}_{#1}}
\newcommand{\Icc}[2]{c^{\Initial}_{{#1},{#2}}}
\newcommand{\Fcc}[2]{c^{\Final}_{{#1},{#2}}}
\newcommand{\Mcc}[2]{\Msg^{({#1})}_{#2}}
\DeclareMathOperator{\Lcm}{lcm}
\DeclareRobustCommand{\cvdots}{%
\vcenter{%
  \baselineskip=4pt%
  \hbox{.}\hbox{.}\hbox{.}
}}
\begin{document}
\title{Bandwidth Cost of Code Conversions in Distributed Storage: Fundamental Limits and Optimal Constructions}

\author{
\IEEEauthorblockN{Francisco Maturana and K. V. Rashmi}

\IEEEauthorblockA{Computer Science Department\\
Carnegie Mellon University\\
Pittsburgh, PA, USA \\
Email: fmaturan@cs.cmu.edu, rvinayak@cs.cmu.edu}
}

\maketitle

\begin{abstract}
  Erasure codes have become an integral part of distributed storage systems as a tool for providing data reliability and durability under the constant threat of device failures.
In such systems, an $[n, k]$ code over a finite field $\Field{q}$ encodes $k$ message symbols from $\Field{q}$ into $n$ codeword symbols from $\Field{q}$ which are then stored on $n$ different nodes in the system.
Recent work has shown that significant savings in storage space can be obtained by tuning $n$ and $k$ to variations in device failure rates.
Such a tuning necessitates \emph{code conversion}: the process of converting already encoded data under an initial $[\In, \Ik]$ code to its equivalent under a final $[\Fn, \Fk]$ code.
The default approach to conversion is to re-encode the data under the new code, which places significant burden on system resources.
\emph{Convertible codes} are a recently proposed class of codes for enabling resource-efficient conversions.
Existing work  on convertible codes has focused on minimizing the access cost, i.e., the number of code symbols accessed during conversion. 
Bandwidth, which corresponds to the amount of data read and transferred, is another important resource to optimize during conversions. 

In this paper, we initiate the study on the fundamental limits on bandwidth used during code conversion and present constructions for \bandwidthoptimal \codenames.
First, we model the code conversion problem using network information flow graphs with variable capacity edges. 
Second, focusing on MDS codes and an important parameter regime called the merge regime, we derive tight lower bounds on the bandwidth cost of conversion. 
The derived bounds show that the bandwidth cost of conversion can be significantly reduced even in regimes where it has been shown that \accesscost cannot be reduced as compared to the default approach.
Third, we present a new construction for MDS convertible codes which matches the proposed lower bound and is thus \bandwidthoptimal during conversion.

\end{abstract}
\section{Introduction}
Erasure codes are an essential tool in distributed storage systems used to add redundancy to data in order to avoid data loss when device failures occur~\cite{ghemawat2003google,borthakurhdfs,huang2012erasure,asfapache}.
In particular, Maximum Distance Separable (MDS) codes are widely used for this purpose in practice because they require the minimum amount of storage overhead for a given level of failure tolerance.
In this setting, an $[n, k]$ MDS code over a finite field $\Field{q}$ is used to encode a message consisting of $k$ symbols of $\Field{q}$ into a codeword consisting of $n$ symbols of $\Field{q}$.\footnote{In the literature, this set of $n$ symbols is sometimes called a \emph{stripe} instead of a codeword. In this work, we make no distinctions between these two terms.}
Each of these $n$ codeword symbols are then stored on $n$ distinct nodes of the distributed storage system (typically, nodes correspond to storage devices residing on different servers).
Large-scale distributed storage systems usually comprise hundreds to thousands of nodes, while $n$ is much smaller in comparison, meaning that these systems store many such \stripes distributed across different subsets of nodes.
The MDS property ensures that any subset of $k$ symbols out of the $n$ symbols in the \stripe is enough to decode the original data. This provides tolerance for up to $(n - k)$ node failures.

The parameters $n$ and $k$ are typically set based on the reliability of storage devices and additional requirements on system performance and storage overhead.
Recent work by Kadekodi et al.~\cite{kadekodi2019cluster} has shown that the failure rate of disks can vary drastically over time, and that significant savings in storage space (and hence operating costs) can be achieved by tuning the code rate to the observed failure rates.
Such tuning typically needs to change both $n$ and $k$ of the code, due to other practical system constraints on these parameters~\cite{kadekodi2019cluster}.
Other reasons for tuning parameters include changing $k$ in response to changes in data popularity, and adapting the code rate to limit the total amount of storage space used.
Such tuning of parameters requires converting the already encoded data from one set of parameters to the newly chosen set of parameters.
The \emph{default approach} to achieving this is to re-encode, that is, read the encoded data, decode if necessary, re-encode it under the new code, and then write it back into the relevant nodes.
However, such an approach necessitates significantly high overhead in terms of network bandwidth, I/O, and CPU resources in the cluster.
This disrupts the normal operation of the storage system.

These applications have led to the study of the \emph{code conversion} problem~\cite{maturana2020convertible,maturana2020access}.
Code conversion is the process of transforming a collection of \stripes encoding data under an initial code $\ICode$ into a collection of \stripes encoding the same data under a final code $\FCode$.\footnote{The superscripts $\Initial$ and $\Final$ stand for initial and final, respectively.}
Given certain parameters for $\ICode$ and $\FCode$, the goal is to design the codes $\ICode$ and $\FCode$ along with a conversion procedure from $\ICode$ to $\FCode$ that is efficient in conversion (according to some notion of conversion cost as will be discussed subsequently).
The design is subject to additional decodability constraints on the codes $\ICode$ and $\FCode$, such as both satisfying the MDS property, since both these codes encode data in the storage system at different snapshots in time.
A pair of codes designed to efficiently convert encoded data from an $[\In, \Ik]$ code to an $[\Fn, \Fk]$ code is called an \emph{\ParamCodeDefault}, and the initial $[\In, \Ik]$ code is said to be \emph{\convertibleDefault}.
In practice, the exact value of the final parameters $\Fn$ and $\Fk$ might not be known at the time of code construction, as it might depend on future failure rates.
Instead, one might have some finite set of possible values for the pair $(\Fn, \Fk)$ that will be chosen from at the time of conversion.
For this reason, we will also seek to construct initial codes which are simultaneously \convertibleDefault for all $(\Fn, \Fk)$ in a given finite set of final parameter values.
This allows the flexibility to choose the parameters $\Fn$ and $\Fk$ at the time conversion is performed.

Existing works on convertible codes have studied efficiency in terms of the \emph{access cost} of conversion, which corresponds to the number of \stripe symbols accessed during conversion.
In particular, previous works~\cite{maturana2020convertible,maturana2020access} have derived tight lower bounds on the access cost of conversion for linear MDS convertible codes, and presented explicit constructions of MDS convertible codes that meet those lower bounds (i.e.\ access-optimal MDS convertible codes).
Another important resource overhead incurred during conversion is that on the network bandwidth, which we call \textit{conversion bandwidth}.
In the system, this corresponds to the total amount of data read and transferred between nodes during conversion.
Access-optimal convertible codes, by virtue of reducing the number of code symbols accessed, also reduce the network bandwidth cost as compared to the default approach.
However, \emph{it is not known if these codes are also bandwidth optimal}.

In this paper, the focus is on MDS convertible codes that incur minimum conversion bandwidth (i.e.\ bandwidth-optimal convertible codes).
We specifically focus on a parameter regime known as the \emph{merge regime}, which has been shown to play the most critical role in the analysis and construction of \codenames~\cite{maturana2020convertible}.
The merge regime corresponds to conversions where multiple initial \stripes are merged into a single final \stripe (i.e.\ $\Fk = \Cs\Ik$ for some integer $\Cs \geq 2$).

For the access cost of conversion in the merge regime, it is known~\cite{maturana2020access} that one cannot do better than the default approach for a wide range of parameters (specifically, when $(\In - \Ik) < (\Fn - \Fk)$, which we term \emph{\regimeone}).
For the remaining set of parameters  (which we term \emph{\regimetwo}), \accessoptimal \codenames lead to considerable reduction in access cost compared to the default approach. Yet, it is possible that there is room for a significant reduction in bandwidth cost in both of these regimes. 
This is possible by considering codes over finite extensions of finite fields $\Field{q^\alpha}$, where each \stripe symbol can be interpreted as an $\alpha$-length vector over the base field $\Field{q}$. {Such codes are called \textit{vector codes}.}
Vector codes allow conversion procedures to download elements of the base field from nodes, allowing them to download only a fraction of the \stripe symbols.
This is inspired by the work on regenerating codes by Dimakis \etal~\cite{dimakis2010network} who used vector codes to reduce bandwidth cost of reconstructing a subset of the \stripe symbols.

\noindent\textbf{Contributions of this paper.}
First, to analyze the \bwcost, we model the code conversion problem via a \textit{network information flow graph}.
This is a directed acyclic graph with capacities, where vertices represent nodes and edges represent the communication between nodes.
The approach of \infograph{}s has been used by Dimakis \etal~\cite{dimakis2010network} in the study on regenerating codes.
Unlike in the case of regenerating codes, the proposed model involves \textit{variable capacities on edges} representing data download during conversion.
This feature turns out to be be critical; we show that conversion procedures which download a uniform amount of data from each node are necessarily sub-optimal.

Second, by using the information flow model, we derive a tight lower bound on the network bandwidth cost of conversion for MDS convertible codes in the merge regime.
Specifically, we use the information flow graph to derive constraints on edge capacities that we then feed into an optimization problem whose objective is to minimize the bandwidth of conversion.
With this we derive a tight lower bound on the total bandwidth cost of conversion for given code parameters $(\In, \Ik; \Fn, \Fk)$.

Third, using the above derived (tight) lower bound, we show that (1) in \regimeone, where no reduction in \accesscost as compared to the default approach is possible, a substantial reduction in \bandwidthcost can be achieved, and (2) in \regimetwo, the \accessoptimal \codenames are indeed \bandwidthoptimal.

Fourth, we present an explicit construction of MDS convertible codes in the merge regime which achieves this lower bound and is therefore optimal in terms of  bandwidth cost.
This construction exploits the \emph{\piggyfw}~\cite{rashmi2017piggybacking}, which is a general framework for constructing vector codes, {and uses} access-optimal MDS convertible codes~\cite{maturana2020access} as a building block.

Above, only a single value of final parameters $\Fn$ and $\Fk$ was considered.
So finally, we propose a technique to transform our construction so as to be \textit{simultaneously} bandwidth-optimal in conversion for any given set of potential final parameter values. 
The proposed transformation exploits piggybacking in a recursive fashion.

\textbf{Organization.}
We review the necessary background and discuss related work in \cref{sec:background}.
In \cref{sec:model}, we describe our model of the conversion process as an information flow graph.
In \cref{sec:merge-bw}, we derive a lower bound on the conversion bandwidth of MDS convertible codes in the merge regime.
In \cref{sec:merge-bw:construction}, we propose an explicit construction for bandwidth-optimal MDS convertible codes in the merge regime, including the transformation to make the construction simultaneously bandwidth-optimal in conversion for multiple final parameter values.
In \cref{sec:evaluation}, we analyze the savings enabled by \bwoptimal \codenames.
We conclude the paper in \cref{sec:conclusion}.

\section{Background and related work}\label{sec:background}

In this section we start by introducing concepts from the existing literature that are used in this paper.
We then do an overview of other related work.

\subsection{Vector codes and puncturing}

An $[n, k, \alpha]$ vector code $\Code$ over a finite field $\Field{q}$ is an $\Field{q}$-linear subspace $\Code \subseteq \Field{q}^{\alpha n}$ of dimension $\alpha k$.
For a given \stripe $\V{c} \in \Code$ and $i \in [n]$, define $\V{c}_i = (c_{\alpha(i - 1) + 1}, \ldots, c_{\alpha i})$ as the $i$-th \emph{symbol} of $\V{c}$, which is a vector of length $\alpha$ over $\Field{q}$.
In the context of vector codes, we will refer to elements from the base field $\Field{q}$ as \emph{\subsyms}.
An \emph{encoding function} for $\Code$ is a function {$f(\Msg) = (f_1(\Msg), \ldots, f_n(\Msg))$} mapping messages $\Msg$ to \stripes of $\Code$.
We denote the encoding of a message $\Msg$ under a code $\Code$ as $\Code(\Msg)$.
An encoding function (or its associated code) is said to be \emph{systematic} if it always maps $\Msg$ to a \stripe having $\Msg$ as a prefix.
For an $[n, k, \alpha]$ vector code $\Code$, the encoding of message $\Msg \in \Field{q}^{k\alpha}$ is given by the mapping $\Msg \mapsto \Msg \V{G}$ where $\V{G} \in \Field{q}^{k\alpha \times n\alpha}$ is called the \emph{generator matrix} of $\Code$, and the columns of $\V{G}$ are called \emph{encoding vectors}.
An $[n, k, \alpha]$ vector code $\Code$ is maximum distance separable (MDS) if its minimum distance is the maximum possible:
\[
    \operatorname{min-dist}(\Code) = \min_{\V{c} \neq \V{c}' \in \Code} |\{ i \in [n] : \V{c}_i \neq \V{c}'_i \}| = n - k + 1.
\]
Equivalently, an $[n, k, \alpha]$ vector code $\Code$ is MDS if and only if for every $\V{c} \in \Code$, any $k$ symbols of $\V{c}$ uniquely specify the remaining $(n - k)$ symbols (i.e.\ every \stripe can be decoded from any $k$ symbols).
A \emph{scalar code} is a vector code with $\alpha = 1$.
We will omit the parameter $\alpha$ when it is clear from context or when $\alpha = 1$.
A \emph{puncturing} of a vector code $\Code$ is the resulting vector code after removing a fixed subset of symbols from every \stripe $\V{c} \in \Code$.

\subsection{Convertible codes~\texorpdfstring{\cite{maturana2020convertible,maturana2020access}}{}}\label{sec:background:convertible}
Convertible codes are erasure codes which are designed to enable encoded data to undergo efficient conversion.
Let $\ICode$ be an $[\In, \Ik]$ code over $\Field{q}$, and $\FCode$ be an $[\Fn, \Fk]$ code over $\Field{q}$.
Previous works on \codenames (and also the present paper) focus on the case where both $\ICode$ and $\FCode$ are linear codes.
In the initial configuration, data will be encoded under the \emph{initial code} $\ICode$, and in the final configuration data will be encoded under the \emph{final code} $\FCode$.
Let $\Ir = (\In - \Ik)$ and $\Fr = (\Fn - \Fk)$.
In order to allow for a change in code dimension from $\Ik$ to $\Fk$, multiple \stripes of codes $\ICode$ and $\FCode$ are considered.
The reason behind this is that in the initial and final configurations, the system must encode the same total number of message symbols (though encoded differently).
Thus, even the simplest instance of the problem involves multiple \stripes in the initial and final configuration.
Let $\Msg$ be a message of length $M = \Lcm(\Ik, \Fk)$ which in the initial configuration is encoded as $\Is = (\sfrac{M}{\Ik})$ \stripes of $\ICode$ and in the final configuration is encoded as $\Fs = (\sfrac{M}{\Fk})$ \stripes of $\FCode$.
For a subset $\Indices \subseteq [M]$, we denote the restriction of $\Msg$ to the coordinates in $\Indices$ as $\Msg|_{\Indices} \in \Field{q}^{|\Indices|}$.
The mapping of message symbols from $\Msg$ to different \stripes is specified by two partitions of $[M]$: an initial partition $\IPart$ and a final partition $\FPart$.
Each subset $\Iset{i} \in \IPart$ must be of size $|\Iset{i}| = \Ik$, and indicates that the submessage $\Msg|_{\Iset{i}}$ is encoded by initial \stripe $i$, for $i \in [\Is]$.
Similarly, each subset $\Fset{j} \in \IPart$ must be of size $|\Fset{j}| = \Fk$, and indicates that the submessage $\Msg|_{\Fset{j}}$ is encoded by final \stripe $j$, for $j \in [\Fs]$.
A conversion from initial code $\ICode$ to final $\FCode$ is a procedure that takes the initial \stripes $\{\ICode(\Msg|_\Iset{i}) : i \in [\Is] \}$ and outputs the final \stripes $\{\FCode(\Msg|_\Fset{i}) : i \in [\Fs] \}$.
Putting all these elements together, a convertible code is formally defined as follows.

\begin{definition}[Convertible code~\cite{maturana2020convertible}]
    \label{def:convertible-code}
    An \ParamCodeDefault\ over $\Field{q}$ is defined by : 
    (1) a pair of initial and final codes $(\ICode, \FCode)$ over $\Field{q}$, where $\ICode$ is an $[\In, \Ik]$ code and $\FCode$ is an $[\Fn, \Fk]$ code, 
    (2) initial and final partitions $(\IPart, \FPart)$ of $\NNodes$ such that $|\Iset{i}| = \Ik$ for $\Iset{i} \in \IPart$ and $|\Fset{j}| = \Fk$ for $\Fset{j} \in \FPart$, 
    (3) a conversion procedure from $\ICode$ to $\FCode$.
\end{definition}

The \accesscost\ of a conversion procedure is the sum of the \emph{read access cost}, i.e.\ the total number of code symbols read, and the \emph{write access cost}, i.e.\ the total number of code symbols written.
An \emph{access-optimal convertible code} is a convertible code whose conversion procedure has the minimum access cost over all convertible codes with given parameters \((\In, \Ik;\allowbreak \Fn, \Fk)\).
Similarly, an $[\In, \Ik]$ code is said to be \accessConvertibleDefault if it is the initial code of an \accessoptimal \ParamCodeDefault.

\Cref{def:convertible-code} considers single fixed values for parameters $\Fn$ and $\Fk$.
In practice, the values of $\Fn$ and $\Fk$ for the conversion might be unknown.
Thus, constructing \codenames which are simultaneously \accessConvertibleDefault for several possible values of $\Fn$ and $\Fk$ is also important (as will be discussed in \cref{sec:merge-bw:construction:multi}).

Though the definition of convertible codes allows for any kind of initial and final codes, in this work we focus exclusively on erasure codes that are MDS.
We call an \ParamCodeDefault\ MDS when both $\ICode$ and $\FCode$ are MDS.
The access cost lower bound for linear MDS convertible codes is known.
\begin{theorem}[\hspace{1sp}\cite{maturana2020access}]
    \label{thm:access-bound}
    Let \(d_1\) be the read access cost of a linear MDS \ParamCodeDefault, and $d_2$ its write access cost..
    When $\Ik \neq \Fk$,
    for every access-optimal code:
    \[
        d_1 \geq
        \begin{cases}
            \Is\Fr + [\Is \bmod \Fs](\Ik - \max\{[\Fk \bmod \Ik], \Fr\}), &
            \text{if } \Ir \geq \Fr \text{ and } \Fr < \min\{\Ik, \Fk\} \\
            \NNodes, &
            \text{otherwise}
        \end{cases}
    \]
    \[
        d_2 \geq \Fs\Fr.
    \]
\end{theorem}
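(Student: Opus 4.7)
The approach is to separate the write bound from the read bound, and to split the read bound into the ``trivial'' regime (where full re-encoding is essentially forced) and the ``non-trivial'' regime (where some reuse is possible). The write bound $d_2 \geq \Fs \Fr$ follows from a short MDS argument applied to each final \stripe individually: since any $\Fk$ symbols of an MDS final \stripe already determine the remaining $\Fn - \Fk$, at most $\Fk$ out of the $\Fn$ symbols per final \stripe can coincide with symbols present in the initial configuration; otherwise the \stripe would be forced to lie in a strictly smaller subspace, contradicting the MDS property of $\FCode$. Hence at least $\Fr$ symbols per final \stripe must be freshly computed and written, and summing over the $\Fs$ final \stripes yields the bound.

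For the read bound in the trivial regime ($\Ir < \Fr$ or $\Fr \geq \min\{\Ik,\Fk\}$), I would use a dimension argument to show that no initial \stripe symbol can be usefully reused: the reused subspace would either require more parity directions than $\Ir$ provides, or it would collide with the entire final \stripe's information dimension so that no saving is possible. In either subcase the conversion must read all $M$ message symbols, giving $d_1 \geq M$.

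For the non-trivial regime, the plan is a per-initial-\stripe accounting. For each initial \stripe $i$, let $\sigma(i)$ denote the number of final \stripes receiving at least one message symbol from \stripe $i$. The first step is to show that every initial \stripe forces at least $\Fr$ reads, by applying the MDS property of $\FCode$ to the task of producing the fresh parity symbols of every final \stripe it touches from the reused symbols plus the read symbols. The second step handles the split \stripes ($\sigma(i) \geq 2$): for each such \stripe I would show that an additional $\Ik - \max\{[\Fk \bmod \Ik], \Fr\}$ reads are forced, with the two terms inside the $\max$ reflecting a combinatorial constraint (the smallest possible overlap with any single final \stripe, controlled by $\Fk \bmod \Ik$) and an algebraic constraint (the reused-dimension budget controlled by $\Fr$). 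The third step is a direct arithmetic argument on the partitions $\IPart$ and $\FPart$: since the block sizes are $\Ik$ and $\Fk$ and both partition the same $M$-element set, at least $[\Is \bmod \Fs]$ of the initial \stripes must be split, which plugs into the per-\stripe bounds to yield the claim.

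\emph{Expected main obstacle.} The hardest ingredient is the tight per-split-\stripe penalty $\Ik - \max\{[\Fk \bmod \Ik], \Fr\}$. It is not enough to count overlapping message symbols, because a clever conversion procedure could reuse linear combinations of initial symbols across multiple final \stripes simultaneously. I would attack this via a linear-algebraic rank argument on the columns of $\IGen$ and $\FGen$ restricted to the relevant message indices, exploiting the full genericity of both MDS codes to convert the combinatorial overlap pattern into a rank inequality. Verifying that this rank inequality is tight, and that the $\max$ correctly captures which of the two constraints binds in which subcase of parameters, is where I expect most of the work to lie.
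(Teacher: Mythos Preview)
This theorem is not proved in the present paper: it is quoted from \cite{maturana2020access} as background (note the citation in the theorem header and the absence of any proof following the statement), so there is no ``paper's own proof'' to compare your proposal against. Your plan is therefore a sketch of an independent proof of a result that the paper simply imports.

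That said, a brief comment on the plan itself. The write bound argument and the general shape of the read bound argument (per-initial-\stripe accounting, distinguishing split versus non-split \stripes, and a rank/MDS argument for the penalty term) are along the right lines and match the structure of the original proof in \cite{maturana2020access}. Your identification of the main obstacle is accurate: the delicate part is indeed the tight per-split-\stripe penalty $\Ik - \max\{[\Fk \bmod \Ik], \Fr\}$, and the original argument handles it via a careful linear-algebraic analysis of which columns of the initial generator matrix can be reused across multiple final \stripes. One point to be careful about in your third step: the claim that at least $[\Is \bmod \Fs]$ initial \stripes must be split is not a purely arithmetic fact about arbitrary partitions of $[M]$, but relies on the specific structure of \emph{access-optimal} partitions (general partitions could have many more or fewer split \stripes); the statement of the theorem restricts to access-optimal codes precisely because the partition must be chosen optimally for the bound to take this form.
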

There are explicit constructions~\cite{maturana2020convertible,maturana2020access} of \accessoptimal \codenames for all valid parameters \((\In, \Ik;\allowbreak \Fn, \Fk)\).
Notice that for \regimeone ($\Ir < \Fr$), read access cost is always $\NNodes$, which is the same as the default approach.
In \regimetwo ($\Ir \geq \Fr$), on the other hand, one can achieve lower access cost than the default approach when $\Fr < \min\{\Ik, \Fk\}$.

During conversion, code symbols from the initial \stripes can play multiple roles: they can become part of different final \stripes, their contents might be read or written, additional code symbols may be added and existing code symbols may be removed.
Based on their role, code symbols can be divided into three groups: (1) \emph{unchanged symbols}, which are present both in the initial and final \stripes without any modifications; (2) \emph{retired symbols}, which are only present in the initial \stripes but not in the final \stripes; and (3) \emph{new symbols}, which are present only in the final \stripes but not in the initial \stripes.
Clearly, both unchanged and retired symbols may be read during conversion, and then linear combinations of data read are written into the new symbols.
Convertible codes which have the maximum number of unchanged symbols ($M$ when $\Ik \neq \Fk$) are called \emph{stable}.

The \emph{merge regime} is a fundamental regime of convertible codes which corresponds to conversions which merge multiple initial \stripes into a single final \stripe.
Thus, convertible codes in the merge regime are such that $\Fk = \Cs\Ik$ for some integer $\Cs \geq 2$.
We recall two lemmas from previous work which are useful for analyzing the merge regime.
\begin{proposition}[\hspace{1sp}\cite{maturana2020convertible}]
    \label{thm:no-partitions}
    For every \ParamCombCodeDefault, all possible pairs of initial and final partitions $(\IPart, \FPart)$ are equivalent up to relabeling.
\end{proposition}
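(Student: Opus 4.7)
The plan is to exploit the fact that in the \combinationregime the parameters force the final partition to be essentially trivial, so the equivalence reduces to a purely combinatorial statement about partitions of a set into equal-size blocks.

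First, I would specialize the general convertible-code setup to the merge regime. Since $\Fk = \Cs \Ik$, one has $\size = \Lcm(\Ik,\Fk) = \Fk$, so the number of initial \stripes is $\Is = \size/\Ik = \Cs$ and the number of final \stripes is $\Fs = \size/\Fk = 1$. By \cref{def:convertible-code} the final partition $\FPart$ consists of subsets of $[\size]$ of size $\Fk = \size$, so $\FPart = \{[\size]\}$ is forced and in particular unique. The initial partition $\IPart$ is any partition of $[\size]$ into $\Cs$ blocks of size $\Ik$, with no further constraint imposed by the definition.

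Second, I would formalize what ``equivalent up to relabeling'' means in this context: two pairs $(\IPart_1,\FPart_1)$ and $(\IPart_2,\FPart_2)$ are equivalent if there exists a permutation $\sigma \in S_\size$ of the message coordinates, together with a permutation $\pi \in S_{\Is}$ of the initial \stripe indices (and of the $\Fs=1$ final \stripe indices, which is trivial), that carries one pair onto the other. Under this identification, a permutation of $[\size]$ simply relabels which coordinate of $\Msg$ goes where, which does not change the code construction problem.

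Third, given two arbitrary initial partitions $\IPart_1 = \{A_1,\dots,A_{\Cs}\}$ and $\IPart_2 = \{B_1,\dots,B_{\Cs}\}$ of $[\size]$, I would construct the required relabeling explicitly: because $|A_i| = |B_i| = \Ik$ for every $i$, one can pick any bijection $\sigma_i \colon A_i \to B_i$ for each $i$ and glue them into a single permutation $\sigma \in S_\size$. By construction $\sigma(\IPart_1) = \IPart_2$, and trivially $\sigma(\FPart_1) = \{[\size]\} = \FPart_2$. Taking $\pi = \Id$ completes the relabeling.

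There is no real obstacle here; the only subtlety is to spell out the notion of equivalence and to observe that $\Fs=1$ makes $\FPart$ unique, so all of the freedom (and all of the relabeling work) lives on the initial side and is absorbed by the symmetric group acting on $[\size]$. No structural property of the codes $\ICode$, $\FCode$ or of the conversion procedure is needed for this statement.
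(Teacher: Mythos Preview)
Your argument is correct and is exactly the natural one: in the \combinationregime $\Fs = 1$ forces $\FPart = \{[\size]\}$, and any two partitions of $[\size]$ into $\Cs$ blocks of size $\Ik$ are related by a permutation of $[\size]$. The paper does not give its own proof of this proposition (it is quoted from \cite{maturana2020convertible}), but your reasoning matches the intended elementary argument.
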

In the merge regime, all data gets mapped to the same final stripe.
Thus, the initial and final partition do not play an important role in this case.
\begin{proposition}[\hspace{1sp}\cite{maturana2020convertible}]
    \label{thm:max-unchanged}
    In an MDS \ParamCombCodeDefault, there can be at most $\Ik$ unchanged symbols from each initial \stripe.
\end{proposition}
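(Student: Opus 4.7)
The plan is to argue by contradiction using the MDS property of the final code together with the linear dependence of encoding vectors that are forced to live in too small a subspace. Suppose some initial stripe $i \in [\Cs]$ contributes at least $\Ik + 1$ unchanged symbols to the final stripe. Each such symbol is simultaneously a symbol of $\ICode(\Msg|_{\Iset{i}})$ and a symbol of the final codeword $\FCode(\Msg)$, so it is a function only of $\Msg|_{\Iset{i}}$, which has $\Ik$ message coordinates out of the total $\Fk = \Cs\Ik$.

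First I would translate this into a statement about the generator matrix $\FGen$ of $\FCode$. Viewing $\FCode$ in scalar form, $\FGen$ is an $\Fk \times \Fn$ matrix (or $\Fk\alpha \times \Fn\alpha$ for an $[\Fn,\Fk,\alpha]$ vector code, in which case each final symbol occupies $\alpha$ consecutive columns). Since each of the $\Ik+1$ unchanged symbols depends only on the coordinates of $\Msg$ indexed by $\Iset{i}$, the corresponding columns of $\FGen$ must be supported only on the $\Ik$ rows indexed by $\Iset{i}$ (respectively $\Ik\alpha$ rows in the vector case). Hence all $(\Ik+1)$ encoding vectors live in an $\Ik$-dimensional subspace of $\Field{q}^{\Fk}$ (respectively $(\Ik+1)\alpha$ column vectors live in a $\Ik\alpha$-dimensional subspace), and are therefore linearly dependent.

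Next I would complete the columns of these $\Ik+1$ unchanged symbols with the columns of any other $\Fk - \Ik - 1 \geq 0$ final symbols (this is possible because $\Fk = \Cs\Ik \geq 2\Ik > \Ik+1$ whenever $\Cs \geq 2$ and $\Ik \geq 2$; the edge case $\Ik = 1$ is handled separately by checking the claim directly, since there a single unchanged symbol trivially bounds $1$). This gives $\Fk$ final symbols whose combined encoding vectors span a subspace of dimension strictly less than $\Fk$ (respectively $\Fk\alpha$). But the MDS property of $\FCode$ requires that every $\Fk$ symbols determine $\Msg$, which is equivalent to the corresponding $\Fk$ columns (or $\Fk\alpha$ columns in the vector case) of $\FGen$ being linearly independent. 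This contradiction establishes the bound of $\Ik$ unchanged symbols per initial stripe.

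The only subtlety I anticipate is phrasing the argument cleanly so that it covers both scalar and vector codes uniformly: in the vector case one must be careful that the support restriction applies to the full block of $\alpha$ columns associated with each unchanged symbol, and that MDS is expressed as full column rank of any $\Fk\alpha$-column block. Once this bookkeeping is fixed, the dimension-counting contradiction goes through verbatim.
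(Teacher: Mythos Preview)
Your argument is correct and is precisely a fleshed-out version of the paper's one-line justification: the paper merely states that ``having more than $\Ik$ unchanged symbols in an initial \stripe would contradict the MDS property'' and cites the original reference, while you supply the generator-matrix/dimension-counting details behind that sentence. One small remark: the separate handling of $\Ik=1$ is unnecessary, since $\Fk-\Ik-1=\Cs\Ik-\Ik-1\geq 2\cdot 1-1-1=0$ already holds for all $\Cs\geq 2$ and $\Ik\geq 1$, so your main argument covers that case as well.
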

This is because having more than $\Ik$ unchanged symbols in an initial \stripe would contradict the MDS property.

\textbf{Access optimal convertible code for merge regime.}
In the merge regime, the bound from \cref{thm:access-bound} in the case where $\Ir \geq \Fr$ and $\Fr < \Ik$ reduces to $d_1 \geq \Cs\Fr$ and $d_2 \geq \Fr$.
Thus in \accessoptimal conversion in the merge regime, only $\Fr$ code symbols from each initial \stripe need to be read.
These symbols are then used to compute $\Fr$ new code symbols.

In \cite{maturana2020convertible}, a simple construction for \accessoptimal \codenames in the merge regime is proposed.
Codes built using this construction are (1) systematic,  (2) linear, (3) during conversion only access the first $\Fr$ parities from each initial stripe (assuming $\Fr \leq \Ir$), and (4) when constructed with a given value of $\Is = \Cs$ and $\Fr = r$, the initial $[\In, \Ik]$ code is \accessConvertibleDefault for all $\Fk = \Cs'\Ik$ and $\Fn = \Fk + r'$ such that $1 \leq \Cs' \leq \Cs$ and $1 \leq r' \leq r$.
In \cref{sec:merge-bw:construction} we use an access-optimal \codename in the merge regime as part of our construction of bandwidth-optimal convertible codes for the merge regime.
We will assume, without loss of generality, that the code has these four properties.

\subsection{Network information flow}
\label{sec:background:regenerating}

\emph{Network information flow}~\cite{ahlswede2000network} is a class of problems that model the transmission of information from sources to sinks in a point-to-point communication network.
\emph{Network coding}~\cite{li2003linear,koetter2003algebraic,ho2006random,sanders2003polynomial,jaggi2005polynomial} is a generalization of store-and-forward routing, where each node in the network is allowed to combine its inputs using a code before communicating messages to other nodes.
For the purposes of this paper, an \emph{information flow graph} is a directed acyclic graph with $G = (V, E)$, where $E \subseteq V \times V \times \mathbb{R}_{\geq 0}$ is the set of edges with non-negative capacities, and $(i, j, c) \in E$ represents that information can be sent noiselessly from node $i$ to node $j$ at rate $c$.
Let $X_1, X_2, \ldots, X_m$ be mutually independent information sources with rates $x_1, x_2, \ldots, x_m$ respectively.
Each information source $X_i$ is associated with a source $s_i \in V$, where it is generated, and a sink $t_i \in V$, where it is required.
In this paper we mainly make use of the \emph{information max-flow bound}~\cite{yeung2002multi} which indicates that it is impossible to transmit $X_i$ at a higher rate than the maximum flow from $s_i$ to $t_i$.
In other words, $x_i \leq \operatorname{max-flow}(s_i, t_i)$ for all $i \in [m]$ is a necessary condition for a network coding scheme satisfying all constraints to exist.
In our analysis, we will consider $s_i$-$t_i$-cuts of the information flow graph, which give an upper bound on $\operatorname{max-flow}(s_i, t_i)$ and thus an upper bound on $x_i$ as well.
We will also utilize the fact that two independent information sources with the same source and sink can be considered as a single information source with rate equal to the sum of their rates.

In~\cite{dimakis2010network}, information flow and network coding is applied to the \emph{repair problem} in distributed storage systems.
The repair problem is the problem of reconstructing a small number of failed code symbols in an erasure code (without having to decode the full \stripe).
Dimakis et al.~\cite{dimakis2010network} use information flow to establish bounds on the storage size and repair network-bandwidth of erasure codes.
Similarly, in this work we use information flow to model the process of code conversion and establish lower bounds on the total amount of network bandwidth used during conversion.

\begin{figure}
    \centering
    \arraycolsep=4pt 
    \renewcommand{\arraystretch}{1.3}
    \begin{equation*}
        \footnotesize
        \begin{array}{c|c|c|c|c|c|c|c|c|c|}
            \cline{2-5}\cline{7-10}
            \text{Symbol } 1 &
            f_1(\Msg_1) &
            f_1(\Msg_2) &
            \cdots &
            f_1(\Msg_\alpha) & &
            f_1(\Msg_1) &
            f_1(\Msg_1) + g_{2,1}(\Msg_2) &
            \cdots &
            f_1(\Msg_\alpha) + g_{\alpha,1}(\Msg_1, \ldots, \Msg_\alpha)
            \\\cline{2-5}\cline{7-10}
            \vdots &
            \vdots &
            \vdots &
            \ddots &
            \vdots & &
            \vdots &
            \vdots &
            \ddots &
            \vdots
            \\\cline{2-5}\cline{7-10}
            \text{Symbol } n &
            f_n(\Msg_1) &
            f_n(\Msg_2) &
            \cdots &
            f_n(\Msg_\alpha) & &
            f_n(\Msg_1) &
            f_n(\Msg_1) + g_{2,n}(\Msg_2) &
            \cdots &
            f_1(\Msg_\alpha) + g_{\alpha,n}(\Msg_1, \ldots, \Msg_\alpha)
            \\\cline{2-5}\cline{7-10}
            \multicolumn{10}{c}{}
            \\[-.9em]
            \multicolumn{1}{c}{} &
            \multicolumn{4}{c}{\text{(a) $\alpha$ instances of the base code}} &
            \multicolumn{1}{c}{} &
            \multicolumn{4}{c}{\text{(b) Piggybacked code}}
            \\
        \end{array}
    \end{equation*}
    {\phantomsubcaption\label{fig:piggyback:base}}
    {\phantomsubcaption\label{fig:piggyback:piggy}}
    \caption{\piggyfw~\cite{rashmi2017piggybacking} for constructing vector codes.}
    \label{fig:piggyback}
\end{figure}

\subsection{\piggyfw for constructing vector codes}
\label{sec:background:piggyback}
The \emph{\piggyfw}~\cite{rashmi2013piggybacking,rashmi2017piggybacking} is a framework for constructing new vector codes building on top of existing codes.
The main technique behind the \piggyfw is to take an existing code as a \emph{base code}, create a new vector code consisting of multiple instances of the base code (as described below), and then add carefully designed functions of the data (called \emph{piggybacks}) from one instance to the others.
These piggybacks are added in a way such that it retains the decodability properties of the base code (such as the MDS property).
The piggyback functions are chosen to confer additional desired properties to the resulting code.
In~\cite{rashmi2017piggybacking}, the authors showcase the \piggyfw by constructing codes that are efficient in reducing bandwidth consumed in repairing \stripe symbols.

More specifically, the \piggyfw works as follows.
Consider a length $n$ code defined by encoding function $f(\Msg) = (f_1(\Msg), f_2(\Msg), \ldots, f_n(\Msg))$.
Now, consider $\alpha$ instances of this base code, each corresponding to a coordinate of the $\alpha$-length vector of each symbol in the new vector code.
Let $\Msg_1, \Msg_2, \cdots, \Msg_\alpha$ denote the independent messages encoded under these $\alpha$ instances, as shown in \cref{fig:piggyback:base}.
For every $i$ such that $2 \leq i \leq \alpha$, one can add to the data encoded in instance $i$ an arbitrary function of the data encoded by instances $\{1, \ldots, (i - 1)\}$.
Such functions are called \emph{piggyback functions}, and the piggyback function corresponding to code symbol $j \in [n]$ of instance $i \in \{2, \ldots, \alpha\}$ is denoted as $g_{i,j}$.

The decoding of the piggybacked code proceeds as follows.
Observe that instance $1$ does not have any piggybacks. 
First, instance $1$ of the base code is decoded using the base code's decoding procedure in order to obtain $\Msg_1$.
Then, $\Msg_1$ is used to compute and subtract any of the piggybacks $\{g_{2,i}(\Msg_1)\}_{i=1}^n$ from instance $2$ and the base code's decoding can then be used to recover $\Msg_2$.
Decoding proceeds like this, using the data decoded from previous instances in order to remove the piggybacks until all instances have been decoded.
It is clear that if an $[n, k, \alpha]$ vector code is constructed from an $[n, k]$ MDS code as the base code using the \piggyfw, then the resulting vector code is also MDS.
This is because any set of $k$ symbols from the vector code contains a set of $k$ subsymbols from each of the $\alpha$ instances.

In this paper, we use the \piggyfw to design a code where piggybacks store data which helps in making the conversion process efficient.
\subsection{Other related work}
\label{sec:related-work}

Apart from~\cite{maturana2020convertible}, which presented a general formulation for the code conversion problem, special cases of code conversion have been studied in the literature.
In~\cite{xia2015tale}, the authors propose two specific pairs of non-MDS codes for a distributed storage system which support conversion with lower access cost than the default approach.
In~\cite{su2020local}, the authors study two kinds of conversion in the context of distributed matrix multiplication.
These works focus on reducing the access cost of conversion, whereas the focus of the current paper is on the bandwidth cost of conversion.
Furthermore, the approaches proposed in these works~\cite{xia2015tale,su2020local} do not come with any theoretical guarantees on optimality, whereas the current paper also presents tight lower bounds on the bandwidth cost of conversion along with bandwidth-optimal constructions.

A related line of research is that of regenerating codes.
Regenerating codes are erasure codes which are designed to solve the repair problem (described in \cref{sec:background:regenerating} above) by downloading the least amount of data from the surviving nodes.
Regenerating codes were first proposed by Dimakis et al.\ \cite{dimakis2010network}.
Several subsequent works~(\eg~\cite{rashmi2011optimal,shah2011distributed, shah2012interference, suh2011exact,cadambe2011polynomial,wang2011codes, tamo2013zigzag,papailiopoulos2013repair,alrabiah2019exponential,balaji2018tight,chowdhury2018new,mahdaviani2018product,rashmi2017piggybacking,sasidharan2017explicit,ye2017explicita,rawat2018mds,goparaju2017minimum,rashmi2014hitchhikers,tamo2014access,rashmi2013solution,cadambe2013asymptotic,wang2012long,shum2011cooperative,rashmi2011enabling,shah2010flexible,ye2016explicit,dau2018repairing,guruswami2017repairing,li2018generic,mardia2019repairing,shanmugam2014repair,tamo2017optimal,kamath2014codes} and references therein)
have provided constructions and generalizations of regenerating codes.
The regenerating codes framework measures the cost of repair in a similar way to how we measure the cost of conversion in this work: in terms of the total amount of network bandwidth used, i.e.\ the total amount of data transferred during repair. 
Thus, some of the techniques used in this paper are inspired by the existing regenerating codes literature, as further explained in \cref{sec:background:regenerating}.
Furthermore, specific instances of code conversion can be viewed as instances of the repair problem, for example, increasing $n$ while keeping $k$ fixed as studied in~\cite{rashmi2011enabling,rashmi2017piggybacking,mousavi2018delayed}.
In such a scenario, one can view adding additional nodes as ``repairing'' them as proposed in \cite{rashmi2011enabling}.
Note that this setting imposes a relaxed requirement of repairing only a specific subset of nodes as compared to regenerating codes which require optimal repair of all nodes.
Yet, the lower bound from regenerating codes still applies for MDS codes, since as shown in~\cite{shah2012interference}, the regenerating codes lower bound for MDS codes applies even for repair of only a single specific node.

Another related line of research is that of locally recoverable codes, also known as local reconstruction codes, or LRCs for short.
LRCs are non-MDS codes with the property that any \stripe symbol can be recovered by reading a relatively small subset of other symbols (and usually much smaller than the subset of symbols required to decode the full data).
Several works~(\eg \cite{gopalan2012locality,gopalan2014explicit,papailiopoulos2014locally,tamo2014family,kamath2014codes,cadambe2015bounds,tamo2016optimal,barg2017locally,frankfischer2017locality,mazumdar2018capacity,guruswami2019how,gopi2019maximally,prakash2012optimal} and references therein) have studied the properties of LRCs (and variants thereof) and proposed constructions.
While the cost metric of LRCs more closely resembles the access cost metric, the constraint that each initial and final \stripe in a convertible code can be decoded independently may be seen as a form of local decodability.

There have been several works studying the {scaling problem}~\cite{zhang2010alv,zheng2011fastscale,wu2012gsr,zhang2014rethinking,huang2015scale,wu2016i/o,zhang2018optimal,hu2018generalized,zhang2020efficient,rai2015adaptivea,rai2015adaptive,wu2020optimal}. 
This problem considers upgrading an erasure-coded storage system with $s$ new empty data nodes.
The general goal is to efficiently and evenly redistribute data across all nodes, while updating parities to reflect the new placement of the data.
This is a fundamentally different problem from the code conversion problem we study in this paper, due to the scaling problem's need to redistribute data across nodes.
\section{Modeling conversion for optimizing network bandwidth}
\label{sec:model}

In this section, we model the conversion process as an information flow problem.
We utilize this model primarily for deriving lower bounds on the total amount of information that needs to be transferred during conversion.
Since our focus is on modeling the conversion process, we consider a single value for each of the final parameters $\Fn$ and $\Fk$.
This model continues to be valid for each individual conversion, even when the final parameters might take multiple values.

In \cref{sec:background:convertible}, we reviewed the definition of convertible codes from literature~\cite{maturana2020access,maturana2020convertible}.
Existing works on convertible codes \cite{maturana2020access,maturana2020convertible} have considered only \emph{scalar codes}, where each code symbol corresponds to a scalar from a finite field $\Field{q}$. 
Considering scalar codes is sufficient when optimizing for access cost, which was the focus in these prior works, since the access cost is measured at the granularity of code symbols.
However, when optimizing the cost of network bandwidth, vector codes can perform better than scalar codes since they allow partial download from a node. 
This allows conversion procedures to only download a fraction of a code symbol and thus only incur the bandwidth cost associated with the size of that fraction.
This can potentially lead to significant reduction in network bandwidth cost.
For this reason, we consider the initial code $\ICode$ as an $[\In, \Ik, \alpha]$ MDS code and the final code $\FCode$ as an $[\Fn, \Fk, \alpha]$ MDS code, where $\alpha \geq 1$ is considered as a free parameter chosen to minimize network bandwidth cost.
This move to vector codes is inspired by the work of Dimakis et al.~\cite{dimakis2010network} on regenerating codes, who showed the benefit of vector codes in reducing network bandwidth in the context of the repair problem.
For MDS convertible codes, message size will be $\blocksize = \size\alpha = \Lcm(\Ik, \Fk)\alpha$, which we interpret as a vector $\Msg \in \Field{q}^{\size\alpha}$ composed of $\size$ symbols made up of $\alpha$ subsymbols each.
We will denote the number of symbols downloaded from node $s$ during conversion as $\dl{s} \leq \alpha$ and extend this notation to sets of nodes as $\beta(\mathcal{S}) = \sum_{s \in \mathcal{S}} \dl{s}$.

Consider an \MdsParamCodeDefault with initial partition $\IPart = \{ \Iset{1}, \ldots, \Iset{\Is} \}$ and final partition $\FPart = \{ \Fset{1}, \ldots, \Fset{\Fs} \}$.
We model conversion using an information flow graph as the one shown in \cref{fig:conversion-information-flow} where message symbols are generated at source nodes, and sinks represent the decoding constraints of the final code.
Symbols of message $\Msg$ are modeled as information sources $X_1, X_2, \ldots, X_{\size}$ of rate $\alpha$ (over $\Field{q}$) each.
For each initial \stripe $i \in [\Is]$, we include one source node $\sourcev{i}$, where the information sources corresponding to the message symbols in $\Iset{i}$ are generated.
Each code symbol of initial \stripe $i$ is modeled as a node with an incoming edge from $\sourcev{i}$.
A \emph{coordinator node} $\centralv$ models the central location where the contents of new symbols are computed, and it {has incoming edges from} all nodes in the initial \stripes.
During conversion, some of the initial code symbols will remain unchanged, some will be retired, and some new code symbols will be added.
Thus, we also include the nodes corresponding to unchanged symbols in the final \stripes (that is, every unchanged node is shown twice in \cref{fig:conversion-information-flow}).
Note that the unchanged nodes in the initial \stripes and the unchanged nodes in the final \stripes are identical, and thus do not add any \bwcost.
For each new symbol we add a node that connects to the coordinator node.
From this point, we will refer to code symbols and their corresponding nodes interchangeably.
For each final \stripe $j \in [\Fs]$, we add a sink $\sinkvs{j}$ which connects to some subset of nodes from final \stripe $j$, and recovers the information sources corresponding to the message symbols in $\Fset{j}$.

\begin{figure}
    \centering
    \includegraphics[]{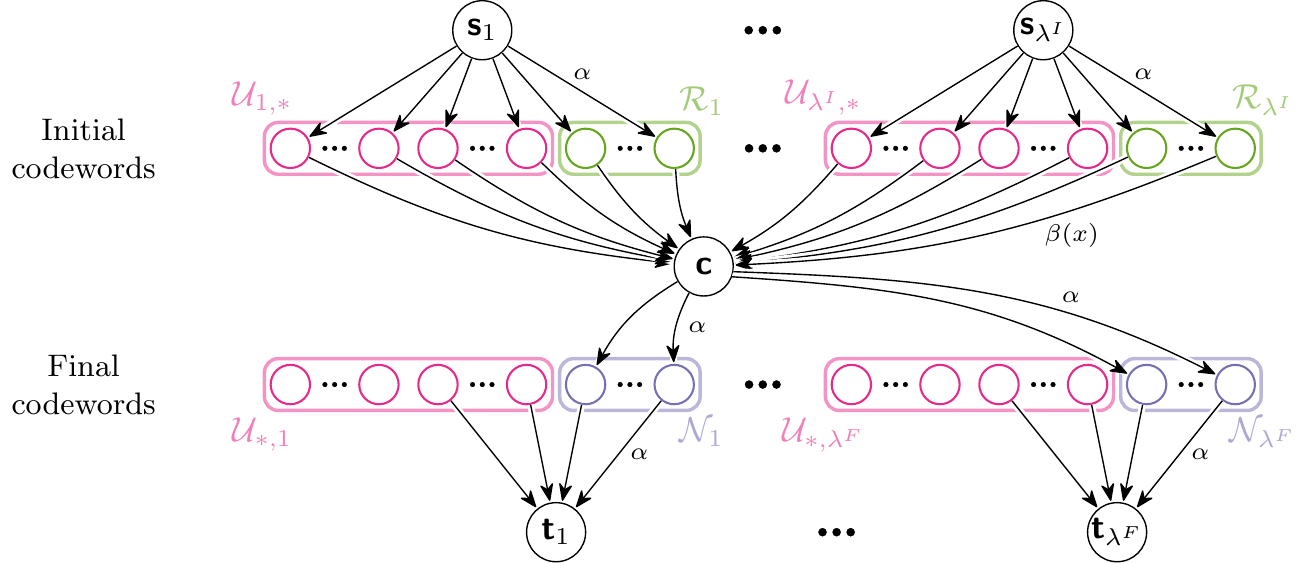}
    \caption{
        Information flow graph of conversion in the general case.
        Unchanged, retired, and new nodes are shown in different colors.
        Notice that each unchanged node in this figure is drawn twice: once in the initial codewords and once in the final codewords.
        These correspond to exactly the same node, but are drawn twice for clarity.
        Some representative edges are labeled with their capacities.
    }
    \label{fig:conversion-information-flow}
\end{figure}

Thus, the information flow graph for a \codename comprises the following nodes:
\begin{itemize}
    \item unchanged nodes $\unchangedvset{i}{j} = \{\unchangedv{i}{j}{1}, \ldots, \unchangedv{i}{j}{|\unchangedvset{i}{j}|}\}$ for all $i \in [\Is],\ j \in [\Fs]$, which are present both in the initial and final \stripes;
    \item retired nodes $\retiredvset{i} = \{\retiredv{i}{1}, \ldots, \retiredv{i}{|\retiredvset{i}|}\}$ for $i \in [\Is]$, which are only present in the initial \stripes;
    \item new nodes $\newvsets{j} = \{\newvs{j}{1}, \ldots, \newvs{j}{|\newvsets{j}|}\}$ for $j \in [\Fs]$, which are only present in the final \stripes;
    \item source nodes $\sourcev{i}$ for $i \in [\Is]$, representing the data to be encoded;
    \item sink nodes $\sinkvs{j}$ for $j \in [\Fs]$, representing the data decoded; and
    \item a coordinator node $\centralv$.
\end{itemize}
In the information flow graph, information source $X_{l}$ is generated at node $\sourcev{i}$ if and only if $l \in \Iset{i}$, and recovered at node $\sinkvs{l}$ if and only if $l \in \Fset{j}$.

Throughout this paper, we use the disjoint union symbol $\sqcup$ when appropriate to emphasize that the two sets in the union are disjoint.
To simplify the notation, when $*$ is used as an index, it denotes the disjoint union of the indexed set over the range of that index, e.g.\ $\unchangedvset{*}{j} = \bigsqcup_{i=1}^{\Is} \unchangedvset{i}{j}$.

The information flow graph must be such that the following conditions hold: (1) the number of nodes per initial \stripe is $\In$, i.e., $|\unchangedvset{i}{*}| + |\retiredvset{i}| = \In$ for all $i \in [\Is]$; and (2) the number of nodes per final \stripe is $\Fn$, i.e., $|\unchangedvset{*}{j}| + |\newvsets{j}| = \Fn$ for all $j \in [\Fs]$.
Additionally, the information flow graph contains the following set of edges $E$, where a directed edge from node $u$ to $v$ with capacity $\delta$ is represented with the triple $(u, v, \delta)$:
\begin{itemize}
    \item $\{(\sourcev{i}, x, \alpha) : x \in \unchangedvset{i}{*} \sqcup \retiredvset{i}\} \subset E$ for each $i \in [\Is]$, {where the capacity corresponds to the size of the data stored on each node;}
    \item $\{(x, \centralv, \dl{x}) : x \in \unchangedvset{i}{*} \sqcup \retiredvset{i}\} \subset E$ for each $i \in [\Is]$, {where the capacity corresponds to the amount of data downloaded from node $x$;}
    \item $\{(\centralv, y, \alpha) : y \in \newvsets{j}\} \subset E$ for each $j \in [\Fs]$, {where the capacity corresponds to the size of the data stored on each new node;}
    \item $\{(y, \sinkvs{j}, \alpha) : y \in V_j\} \subset E$ for $V_j \subseteq \unchangedvset{*}{j} \sqcup \newvsets{j}$ such that $|V_j| = \Fk$, for all $j \in [\Fs]$, {where the capacity corresponds to the size of the data stored on each node}.
\end{itemize}
The sinks $\sinkvs{j}$ represent the decoding constraints of the final code, and each choice of set $V_j$ will represent a different choice $k$ code symbols for decoding the final \stripe.
A necessary condition for a conversion procedure is to satisfy all sinks $\sinkvs{j}$ for all possible $V_1, \ldots, V_{\Fs}$.
The sets $\unchangedvset{i}{j}, \retiredvset{i}, \newvsets{j}$ and the capacities $\dl{x}$ are determined by the conversion procedure of the convertible code.
\Cref{fig:conversion-information-flow} shows the information flow graph of an arbitrary convertible code.

\begin{definition}[Conversion bandwidth]
    The \emph{conversion bandwidth} $\gamma$ is the total network bandwidth used during conversion and is equal to the total amount of data that is transferred to the coordinator node $\centralv$ from the initial nodes plus the total amount of data transferred to the new nodes from the coordinator node $\centralv$, that is:
    \begin{equation}\label{eq:bandwidth}
        \gamma = \dl{\unchangedvset{*}{*} \sqcup \retiredvset{*}} + |\newvsets{*}|\alpha.
    \end{equation}
\end{definition}
Once the structure of the graph is set and fixed, information flow analysis gives lower bounds on the capacities $\dl{x}$.
Therefore, a part of our objective in designing convertible codes is to set $\unchangedvset{i}{j}, \retiredvset{i}, \newvsets{j}$ so as to minimize the lower bound on $\gamma$.

\begin{remark}
    In practice, \convbw can sometimes be further reduced by placing the coordinator node along with a new node and/or a retired node in the same server.
    One can even first split the coordinator node into several coordinator nodes, each processing data which is not used in conjunction with data processed by other coordinator nodes, and then place them in the same server as a new node and/or a retired node.
    Such ``optimizations'' do not fundamentally alter our result, and hence are left out in order to make the exposition clear.
\end{remark}

\section{Optimizing network bandwidth of conversion in the merge regime}
\label{sec:merge-bw}
\label{sec:merge-bw:lower-bound}

\begin{figure}
    \begin{subfigure}{.495\textwidth}
    \centering
    \includegraphics[width=.95\textwidth]{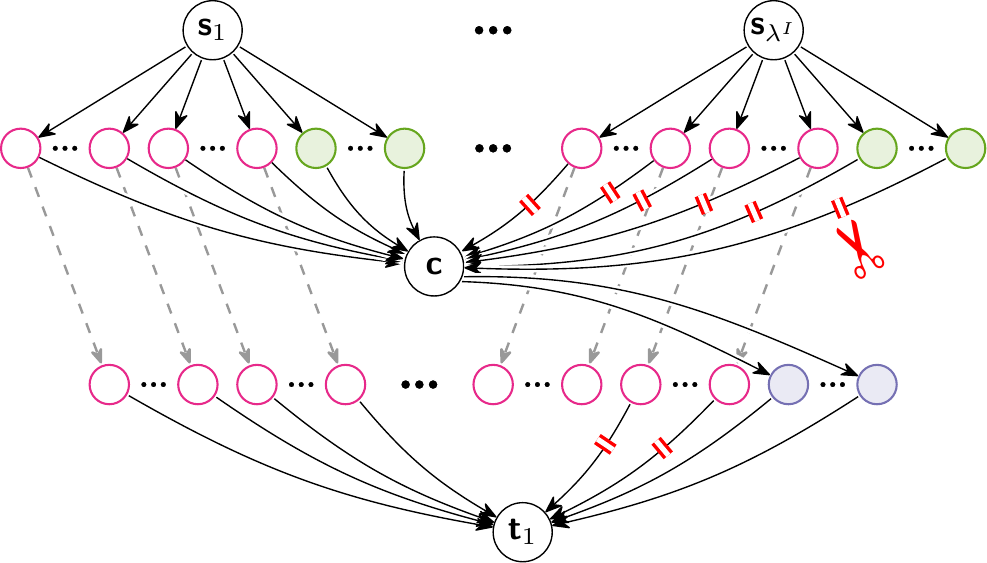}
    \caption{}
    \label{fig:merge-flow:cut-1}
    \end{subfigure}
    \begin{subfigure}{.495\textwidth}
    \centering
    \includegraphics[width=.95\textwidth]{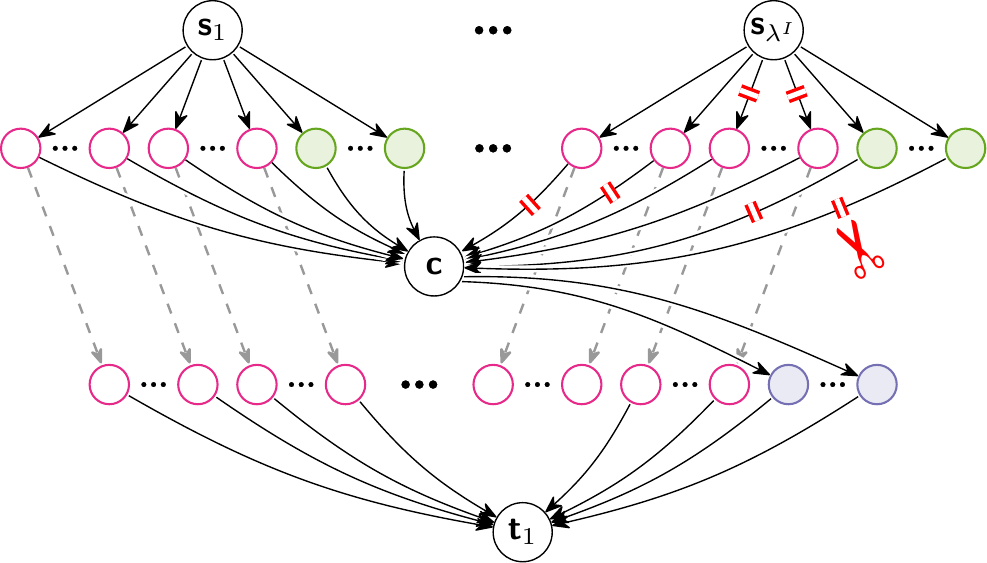}
    \caption{}
    \label{fig:merge-flow:cut-2}
    \end{subfigure}
    \caption{
        Information flow graph of conversion in the merge regime with two different cuts (used in proofs).
        For clarity, each unchanged node is drawn twice: once in the initial \stripes and once in the final \stripe.
        These two instances are connected by a dashed arrow.
        Marked edges denote a graph cut.
    }
    \label{fig:merge-flow}
\end{figure}

In this section, we use the information flow model presented in \cref{sec:model} to derive a lower bound on the \convbw for MDS codes in the merge regime. 
Recall from \cref{sec:background:convertible}, that convertible codes in the merge regime are those where $\Fk = \Cs\Ik$ for some integer $\Cs \geq 2$, i.e., this regime corresponds to conversions were multiple initial \stripes are merged into a single final \stripe.
As in the previous section, our analysis focuses on a single conversion, and thus a single value for the final parameters $\Fn$ and $\Fk$.
The lower bound on conversion bandwidth derived in this section continues to hold even when we consider multiple possible values for the final parameters $\Fn$ and $\Fk$.

Consider an \ParamCombCodeDefault in the merge regime, for some integer $\Cs \geq 2$.
Note that for all convertible codes in the merge regime, it holds that $\Is = \Cs$ and $\Fs = 1$.
Since all initial and final partitions $(\IPart, \FPart)$ are equivalent up to relabeling in this regime (by \cref{thm:no-partitions}~\cite{maturana2020convertible}), we can omit them from our analysis.
Note also that every information source shares the same sink, as there is only a single sink $\sinkvs{1}$.
Thus, we may treat each source $s_i$ as having a single information source $X_i$ of rate $\alpha \Ik$ ($i \in [\Is]$).
\Cref{fig:merge-flow:cut-1} shows the information flow graph for a convertible code in the merge regime.

First, we derive a general lower bound on \convbw in the merge regime by considering a simple cut in the information flow graph.
Intuitively, this lower bound emerges from the fact that new nodes need to have a certain amount of information from each initial \stripe in order to fulfill the MDS property of the final code. 
This lower bound depends on the number of unchanged nodes and achieves its minimum when the number of unchanged nodes is maximized.
Recall from \cref{sec:background:convertible} that \codenames with maximum number of unchanged nodes are called \textit{stable} \codenames. Thus, the derived lower bound is minimized for stable \codenames.
\begin{lemma}\label{thm:merge-bw:lower-bound:one}
    Consider an \mds\ \ParamCombCodeDefault.
    Then \(\gamma \geq \Cs\alpha\min\{\Fr, \Ik\} + \Fr\alpha\), where equality is only possible for stable codes.
\end{lemma}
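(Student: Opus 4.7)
The plan is to apply the information max-flow bound to carefully chosen $\sourcev{i}$-$\sinkvs{1}$ cuts in the information flow graph of \cref{fig:merge-flow} and then combine the resulting per-stripe constraints. Concretely, for each initial stripe $i \in [\Cs]$, I would place $\{\sourcev{i}\} \cup \unchangedvset{i}{1} \cup \retiredvset{i}$ on the source side of a cut and every other vertex (including $\centralv$ and $\newvsets{1}$) on the sink side. The only edges that cross are the download edges from $\unchangedvset{i}{1} \sqcup \retiredvset{i}$ to $\centralv$ (of total capacity $\dl{\unchangedvset{i}{1} \sqcup \retiredvset{i}}$), together with the direct edges from the unchanged nodes of stripe $i$ that happen to lie in the decoding set $V_1 \subseteq \unchangedvset{*}{1} \sqcup \newvsets{1}$ chosen by $\sinkvs{1}$ (each of capacity $\alpha$). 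Since $\sourcev{i}$ must deliver its rate-$\alpha\Ik$ information source $X_i$ to $\sinkvs{1}$, the max-flow bound yields
\[
    \dl{\unchangedvset{i}{1} \sqcup \retiredvset{i}} + \alpha\bigl|V_1 \cap \unchangedvset{i}{1}\bigr| \;\geq\; \alpha\Ik.
\]

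The crucial step is to exploit that the MDS property of $\FCode$ requires the above inequality for \emph{every} valid $V_1$ of size $\Fk$, so I would pick $V_1$ adversarially to exclude as many nodes of $\unchangedvset{i}{1}$ as possible. Writing $u_i := |\unchangedvset{i}{1}|$, \cref{thm:max-unchanged} gives $u_i \leq \Ik$; since $V_1$ excludes exactly $\Fr = \Fn - \Fk$ of the $\Fn$ candidate nodes, a straightforward counting argument shows that $|V_1 \cap \unchangedvset{i}{1}| = \max(0, u_i - \Fr)$ is achievable. This produces the sharpened per-stripe bound $\dl{\unchangedvset{i}{1} \sqcup \retiredvset{i}} \geq \alpha\bigl(\Ik - \max(0, u_i - \Fr)\bigr)$. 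Summing over $i \in [\Cs]$ and adding the coordinator-to-new-node contribution $|\newvsets{1}|\alpha = (\Fn - \sum_i u_i)\alpha$ gives the parametric lower bound
\[
    \gamma \;\geq\; \alpha\Bigl[\Cs\Ik + \Fn - \sum_{i=1}^{\Cs}\bigl(u_i + \max(0, u_i - \Fr)\bigr)\Bigr].
\]

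To finish, I would observe that $u \mapsto u + \max(0, u - \Fr)$ is nondecreasing and, over $u \in \{0, 1, \ldots, \Ik\}$, is maximized at $u = \Ik$ with value $\Ik + \max(0, \Ik - \Fr)$. Substituting this upper bound and splitting into the cases $\Fr \geq \Ik$ and $\Fr < \Ik$ collapses the right-hand side to $\Cs\alpha\min\{\Fr, \Ik\} + \Fr\alpha$ in both. Equality forces $u_i = \Ik$ for every $i$, whence $|\unchangedvset{*}{1}| = \Cs\Ik = \Fk = \size$, which is precisely the definition of a stable code. The main obstacle is the second paragraph above: recognizing that $V_1$ is a free variable and choosing it adversarially is essential; without that step, the cut only yields the loose inequality $\dl{\unchangedvset{i}{1} \sqcup \retiredvset{i}} \geq \alpha(\Ik - u_i)$, which after summation cancels the dependence on $u_i$ entirely and leaves a bound that does not exploit \cref{thm:max-unchanged}.
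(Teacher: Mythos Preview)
Your proposal is correct and follows essentially the same approach as the paper: the same source-side cut $\{\sourcev{i}\}\cup\unchangedvset{i}{1}\cup\retiredvset{i}$, the same adversarial choice of the sink's decoding set to exclude $\min\{\Fr,u_i\}$ unchanged nodes of stripe $i$, the same per-stripe inequality (your $\Ik-\max(0,u_i-\Fr)$ equals the paper's $\Ik+\Fr-\max\{u_i,\Fr\}$), and the same summation and minimization over the $u_i$ via \cref{thm:max-unchanged}. The only differences are cosmetic in how the inequality is rewritten and how the minimization over $u_i$ is phrased.
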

\begin{proof}
    We prove this inequality via an information flow argument.
    Let $i \in [\Is]$ and consider the information source generated at source $\sourcev{i}$.
    Let $S \subseteq \unchangedvset{i}{1}$ be a subset of unchanged nodes from initial \stripe $i$ of size $\Ssize{i} = \min\{\Fr, |\unchangedvset{i}{1}|\}$.
    Consider a sink $\sinkvs{1}$ that connects to nodes $\unchangedvset{*}{1} \setminus S$.
    We choose the graph cut defined by nodes $\{\sourcev{i}\} \sqcup \unchangedvset{i}{1} \sqcup \retiredvset{i}$ (see \cref{fig:merge-flow:cut-1}, which depicts the cut for $i = \Is$).
    This cut yields the following inequality:
    \[
        \Ik\alpha \leq \max\{|\unchangedvset{i}{1}| - \Fr, 0\}\alpha + \dl{\unchangedvset{i}{1} \sqcup \retiredvset{i}}
    \]
    \[
        \iff \dl{\unchangedvset{i}{1} \sqcup \retiredvset{i}} \geq
        (\Ik + \Fr - \max\{|\unchangedvset{i}{1}|, \Fr\})\alpha
    \]
    By summing this inequality over all sources $i \in [\Is]$ and using the definition of $\gamma$ (\cref{eq:bandwidth}), we obtain:
    \[
        \gamma \geq \sum_{i=1}^{\Is} (\Ik + \Fr - \max\{|\unchangedvset{i}{1}|, \Fr\})\alpha + |\newvsets{1}|\alpha
    \]
    By \cref{thm:max-unchanged}\cite{maturana2020convertible}, $|\unchangedvset{i}{1}| \leq \Ik$.
    Therefore, it is clear that the right hand side achieves its minimum if and only $|\unchangedvset{i}{1}| = \Ik$ for all $i \in [\Is]$, proving the result.
\end{proof}
\begin{remark}
    Note that the \convbw lower bound described in \cref{thm:merge-bw:lower-bound:one} coincides with the access-cost lower bound described in \cref{thm:access-bound} when $\Ir \geq \Fr$.
    This follows by recalling that each node corresponds to an $\alpha$-length vector, and for scalar codes $\alpha = 1$.
\end{remark}

In particular, this implies that convertible codes in the merge regime which are access-optimal and have $\Ir \geq \Fr$ are also bandwidth-optimal.
Observe that this corresponds to \regimeone.
However, as we will show next, this property fails to hold when $\Ir < \Fr$ (that is, \regimetwo).

We next derive a lower bound on \convbw which is tighter than \cref{thm:merge-bw:lower-bound:one} when $\Ir < \Fr$.
Nevertheless, it allows for less \convbw usage than the \accessoptimal codes.

Intuitively, the data downloaded from retired nodes during conversion will be ``more useful'' than the data downloaded from unchanged nodes, since unchanged nodes already form part of the final \stripe.
At the same time, it is better to have the maximum amount of unchanged nodes per initial \stripe ($\Ik$) because this minimizes the number of new nodes that need to be constructed.
However, this leads to fewer retired nodes per initial \stripe ($\Ir$).
If the number of retired nodes per initial \stripe is less than the number of new nodes ($\Ir < \Fr$), then conversion procedures are forced to download data from unchanged nodes.
This is because one needs to download at least $\Fr\alpha$ from each initial \stripe (by \cref{thm:merge-bw:lower-bound:one}).
Since data from unchanged nodes is ``less useful'', more data needs to be downloaded in order to construct the new nodes.

As in the case of \cref{thm:merge-bw:lower-bound:one}, this lower bound depends on the number of unchanged nodes in each initial \stripe, and achieves its minimum in the case of stable convertible codes.
\begin{lemma}\label{thm:merge-bw:lower-bound:two}
    Consider an \mds\ \ParamCombCodeDefault, with \(\Ir < \Fr \leq \Ik\).
    Then \(\gamma \geq \Cs\alpha\left(\Ir + \Ik\left( 1 - \frac{\Ir}{\Fr} \right) \right) + \Fr\alpha\), where equality is only possible for stable codes.
\end{lemma}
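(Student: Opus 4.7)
The plan is to strengthen the information-flow argument used in Lemma~\ref{thm:merge-bw:lower-bound:one} by introducing a family of cuts indexed by the size-$\Fr$ subsets of the unchanged nodes of a given initial stripe, and then to aggregate the resulting inequalities via a double-counting argument. The previous lemma applied a single cut per stripe and yielded $d_i \geq \Fr\alpha$, which is not strong enough; the key new ingredient is that when $\Ir < \Fr$, the retired nodes alone cannot carry the required $\Fr\alpha$ to the coordinator, so inequalities over \emph{every} way of distributing this deficit across unchanged nodes must hold simultaneously.

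Specifically, I first fix a stripe $i \in [\Cs]$ and a size-$\Fr$ subset $A \subseteq \unchangedvset{i}{1}$ (feasible in the stable case since $|\unchangedvset{i}{1}| = \Ik \geq \Fr$). I select the decoding set $V_1 = (\unchangedvset{*}{1} \setminus A) \cup \newvsets{1}$, which has size $\Fk$ for a stable code, and apply the max-flow bound $\operatorname{max-flow}(\sourcev{i}, \sinkvs{1}) \geq \Ik\alpha$ to the cut $S = \{\sourcev{i}\} \cup A \cup \retiredvset{i}$ (shown in \cref{fig:merge-flow:cut-2}). The capacity of this cut evaluates to $(|\unchangedvset{i}{1}| - \Fr)\alpha + \dl{A} + \dl{\retiredvset{i}}$, which yields the per-subset inequality
\[
    \dl{A} + \dl{\retiredvset{i}} \geq \Fr\alpha.
\]

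Since this inequality holds for \emph{every} size-$\Fr$ subset $A \subseteq \unchangedvset{i}{1}$, I sum over all $\binom{|\unchangedvset{i}{1}|}{\Fr}$ such subsets. Each unchanged node of stripe $i$ lies in $\binom{|\unchangedvset{i}{1}|-1}{\Fr-1}$ of them, so dividing gives the bootstrapped inequality $\dl{\unchangedvset{i}{1}} \geq \frac{|\unchangedvset{i}{1}|}{\Fr}\bigl(\Fr\alpha - \dl{\retiredvset{i}}\bigr)$. Adding $\dl{\retiredvset{i}}$ to both sides produces a lower bound on $d_i := \dl{\unchangedvset{i}{1}} + \dl{\retiredvset{i}}$ that is decreasing in $\dl{\retiredvset{i}}$ whenever $|\unchangedvset{i}{1}| \geq \Fr$, so substituting the trivial upper bound $\dl{\retiredvset{i}} \leq \Ir\alpha$ (since $|\retiredvset{i}| = \Ir$ in the stable case) and simplifying gives $d_i \geq \alpha\bigl(\Ir + \Ik(\Fr - \Ir)/\Fr\bigr)$.

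Finally, summing this per-stripe bound over all $\Cs$ initial stripes and adding the $|\newvsets{1}|\alpha \geq \Fr\alpha$ contribution from the edges $\centralv \to \newvsets{1}$ yields the stated lower bound on $\gamma$. The ``equality only for stable codes'' claim follows by tracking where the inequalities can be tight: each step is tight only when $|\unchangedvset{i}{1}| = \Ik$ for every $i$ (forcing $|\newvsets{1}| = \Fr$ exactly) and when the downloads from unchanged nodes are balanced across the $\Fr$-subsets with $\dl{\retiredvset{i}} = \Ir\alpha$; for any non-stable configuration, either $d_i$ or $|\newvsets{1}|\alpha$ strictly increases. The main obstacle is the double-counting step, since no single cut produces the crucial $\Ik/\Fr$ factor; that factor only emerges after averaging over the family of cuts indexed by $A$, which is the essential refinement over the argument of Lemma~\ref{thm:merge-bw:lower-bound:one}.
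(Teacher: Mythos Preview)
Your approach---the family of cuts indexed by size-$\Fr$ subsets $A \subseteq \unchangedvset{i}{1}$, followed by summing over all such subsets---is exactly the paper's. The paper obtains the same averaged inequality (its \cref{eq:merge-flow}) and then solves a small linear program via KKT to minimize $\gamma$; your direct substitution $\dl{\retiredvset{i}} \leq \Ir\alpha$ is just the LP solution specialized to the stable case.

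The gap is in the non-stable case. The lemma asserts the bound for \emph{every} MDS \codename in the regime, not only stable ones, yet your derivation of $d_i \geq \alpha\bigl(\Ir + \Ik(1 - \Ir/\Fr)\bigr)$ explicitly uses $|\unchangedvset{i}{1}| = \Ik$ (both to guarantee a size-$\Fr$ subset $A$ exists and to set $|\retiredvset{i}| = \Ir$). Your closing sentence ``either $d_i$ or $|\newvsets{1}|\alpha$ strictly increases'' is an assertion, not a proof: when $|\unchangedvset{i}{1}|$ drops, $|\retiredvset{i}|$ grows, so the cap on $\dl{\retiredvset{i}}$ loosens and the per-stripe lower bound on $d_i$ you derived could in principle \emph{decrease}; you must show this decrease never offsets the increase in $|\newvsets{1}|\alpha$. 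The paper handles this by keeping the general $|\unchangedvset{i}{1}|$ throughout, taking subsets of size $\min\{\Fr,|\unchangedvset{i}{1}|\}$ so the cut argument survives when $|\unchangedvset{i}{1}| < \Fr$, and then verifying that the resulting expression for the LP optimum is minimized (strictly) at $|\unchangedvset{i}{1}| = \Ik$ for all $i$. Your argument can be completed the same way---carry $u_i := |\unchangedvset{i}{1}|$ through the averaging to get $d_i \geq \alpha(\In - u_i\Ir/\Fr)$ when $u_i \geq \Fr$, then check that $\sum_i d_i + |\newvsets{1}|\alpha$ is strictly increasing as any $u_i$ drops below $\Ik$---but as written it establishes only the stable case.
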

\begin{proof}
    We prove this via an information flow argument.
    Let $i \in [\Is]$ and consider the information source generated at source $\sourcev{i}$.
    Let $S \subseteq \unchangedvset{i}{1}$ be a subset of size $\Ssize{i} = \min\{\Fr, |\unchangedvset{i}{1}|\}$.
    Consider a sink $\sinkvs{1}$ that connects to the nodes in $\unchangedvset{*}{1} \setminus S$.
    We choose the graph cut defined by nodes $\{\sourcev{i}\} \sqcup S \sqcup \retiredvset{i}$ (see \cref{fig:merge-flow:cut-2}, which depicts the cut when $i = \Is$).
    This yields the following inequality:
    \[
        \Ik\alpha \leq (|\unchangedvset{i}{1}| - \Ssize{i})\alpha
        + \dl{S} + \dl{\retiredvset{i}}.
    \]
    By rearranging this inequality and summing over all possible choices of subset $S$, we obtain the following inequality:
    \begin{equation*}
        \binom{|\unchangedvset{i}{1}|}{\Ssize{i}}
        (\Ik + \Ssize{i} - |\unchangedvset{i}{1}|)\alpha \leq
        \binom{|\unchangedvset{i}{1}| - 1}{\Ssize{i} - 1}
        \dl{\unchangedvset{i}{1}} + 
        \binom{|\unchangedvset{i}{1}|}{\Ssize{i}}
        \dl{\retiredvset{i}}
    \end{equation*}
    \begin{equation} \label[ineq]{eq:merge-flow}
        \iff |\unchangedvset{i}{1}|
        (\Ik + \Ssize{i} - |\unchangedvset{i}{1}|)\alpha \leq
        \Ssize{i}
        \dl{\unchangedvset{i}{1}} + 
        |\unchangedvset{i}{1}|
        \dl{\retiredvset{i}}.
    \end{equation}
    Then, our strategy to obtain a lower bound is to find the minimum value for conversion bandwidth $\gamma$ which satisfies \cref{eq:merge-flow} for all $i \in [\Is]$, which can be formulated as the following optimization problem:
    \begin{equation} \label[linearprogram]{eq:merge-lp}
    \begin{array}{rl}
        \text{minimize} & \gamma =
        \sum_{i \in \Is} \left[\dl{\unchangedvset{i}{1}} + \dl{\retiredvset{i}}\right]
        + |\newvsets{1}|\alpha \\
        \text{subject to} & \text{\cref{eq:merge-flow}, for all $i \in [\Is]$} \\
                          & 0 \leq \dl{x} \leq \alpha, \text{ for all } x \in \unchangedvset{*}{1} \sqcup \retiredvset{*}.
    \end{array}
    \end{equation}
    Intuitively, this linear program shows that it is preferable to download more data from retired nodes ($\dl{\retiredvset{i}}$) than unchanged nodes ($\dl{\unchangedvset{i}{1}}$), since both have the same impact on $\gamma$ but the contribution $\dl{\retiredvset{i}}$ towards satisfying \cref{eq:merge-flow} is greater than or equal than that of $\dl{\retiredvset{i}}$, because $\Ssize{i} \leq |\unchangedvset{i}{1}|$ by definition.
    Thus to obtain an optimal solution we first set $\dl{\retiredvset{i}} = \min\{\Ik + \Ssize{i} - |\unchangedvset{i}{1}|, |\retiredvset{i}|\}\alpha$ to the maximum needed for all $i \in [\Is]$, and then set:
    \[
        \sum_{x \in \unchangedvset{i}{1}} \dl{x}
        =
        \frac{
        \max\{\Ssize{i} - \Ir, 0\}
        |\unchangedvset{i}{1}|\alpha
        }
        {\Ssize{i}},
        \qquad
        \text{for all } i \in [\Is]
    \]
    to satisfy the constraints.
    It is straightforward to check that this solution satisfies the KKT (Karush-Kuhn-Tucker) conditions, and thus is an optimal solution to \cref{eq:merge-lp}.
    By replacing these terms back into $\gamma$ and simplifying we obtain the optimal objective value:
    \[
        \gamma^* =
        \sum_{i=1}^{\Is}
        \left[
        \Ik
        -
        \min\{\Ir, \Ssize{i}\}
        \left( \frac{|\unchangedvset{i}{1}|}{\Ssize{i}} - 1 \right)
        \right]\alpha
        +
        |\newvsets{1}|\alpha
    \]
    It is easy to show that the right hand side achieves its minimum if and only if $|\unchangedvset{i}{1}| = \Ik$ for all $i \in [\Is]$ (i.e., the code is stable).
    This gives the following lower bound for conversion bandwidth:
    \[
        \gamma \geq \Is\alpha\left( \Ir + \Ik \left( 1 - \frac{\Ir}{\Fr} \right) \right) + \Fr\alpha.
    \]
\end{proof}
By combining \cref{thm:merge-bw:lower-bound:one,thm:merge-bw:lower-bound:two} we obtain the following general lower bound on \convbw of \mds \codenames in the merge regime.
\begin{theorem}\label{thm:merge-bw:lower-bound}
    For any \mds\ \ParamCombCodeDefault:
    \begin{equation*}
        \gamma \geq
        \begin{cases}
            \Cs\alpha \min\{\Ik, \Fr\} + \Fr\alpha, & \text{if } \Ir \geq \Fr \text{ or } \Ik \leq \Fr\\
            \Cs\alpha\left(\Ir + \Ik \left( 1 - \frac{\Ir}{\Fr} \right) \right) + \Fr\alpha, & \text{otherwise}
        \end{cases}
    \end{equation*}
    where equality can only be achieved by stable \codenames.
\end{theorem}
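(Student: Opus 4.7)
The plan is to deduce the theorem as a straightforward case analysis on the conditions appearing in the two preceding lemmas, since the two cases of the theorem correspond precisely to the hypotheses of \cref{thm:merge-bw:lower-bound:one,thm:merge-bw:lower-bound:two}. No new information-flow argument is needed; the work has already been done in deriving the two lemmas, and the theorem is essentially a repackaging.

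Concretely, I would split on whether $\Ir \geq \Fr$ or $\Ik \leq \Fr$ holds. In that first case, \cref{thm:merge-bw:lower-bound:one} applies unconditionally and yields exactly the bound $\gamma \geq \Cs\alpha \min\{\Ik,\Fr\} + \Fr\alpha$ claimed by the first branch of the theorem. In the complementary case, both $\Ir < \Fr$ and $\Ik > \Fr$ hold, which is precisely the hypothesis $\Ir < \Fr \leq \Ik$ of \cref{thm:merge-bw:lower-bound:two}, and invoking that lemma directly gives the second branch $\gamma \geq \Cs\alpha(\Ir + \Ik(1 - \Ir/\Fr)) + \Fr\alpha$. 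The ``equality only for stable \codenames'' clause is inherited because both supporting lemmas already assert it.

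The one step that requires a brief justification, rather than pure citation, is to confirm that the bound invoked in the second case is no weaker than the bound from \cref{thm:merge-bw:lower-bound:one} (which of course still holds). This amounts to checking the elementary inequality $\Ir + \Ik(1 - \Ir/\Fr) \geq \Fr$ under the assumption $\Ir < \Fr < \Ik$; after clearing denominators it reduces to $(\Ir - \Fr)(\Fr - \Ik) \geq 0$, and both factors are negative in this regime, so the inequality holds. This is the main (and essentially only) obstacle, and it is mild; the real technical content of the theorem lives in the two lemmas, not in their combination.
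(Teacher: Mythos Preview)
Your proposal is correct and matches the paper's own proof, which is the one-line ``Follows from \cref{thm:merge-bw:lower-bound:one,thm:merge-bw:lower-bound:two}.'' Your explicit case split and the verification that the second bound dominates the first in the regime $\Ir < \Fr < \Ik$ are accurate and in fact more detailed than what the paper writes out.
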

\begin{proof}
    Follows from \cref{thm:merge-bw:lower-bound:one,thm:merge-bw:lower-bound:two}.
\end{proof}
In \cref{sec:merge-bw:construction}, we show that the lower bound of \cref{thm:merge-bw:lower-bound} is indeed achievable for all parameter values in the merge regime, and thus it is tight.
We will refer to \codenames that meet this bound with equality as \emph{\bandwidthoptimal}.

\begin{remark}
    Observe that the model above allows for nonuniform data download during conversion, that is, it allows the amount of data downloaded from each node during conversion to be different.
    If instead one were to assume uniform download, i.e.\ $\dl{x} = \dl{y}$ for all $x, y \in \unchangedvset{*}{*} \sqcup \retiredvset{*}$, then a higher lower bound for conversion bandwidth $\gamma$ is obtained (mainly due to \cref{eq:merge-flow} in the proof of \cref{thm:merge-bw:lower-bound:two}).
    Since the lower bound of \cref{thm:merge-bw:lower-bound} is achievable, this implies that assuming uniform download necessarily leads to a suboptimal solution.
\end{remark}

\begin{remark}
    The case where $\Ik = \Fk$ can be analyzed using the same techniques used in this section.
    In this case, $\Is = \Cs = 1$.
    There are some differences compared to the case of the merge regime: for example, in this case the number of unchanged nodes can be at most $\min\{\In, \Fn\}$ (in contrast to the $\Cs\Ik$ maximum of the merge regime).
    So, conversion bandwidth in the case where $\In \geq \Fn$ is zero, since we can simply keep $\Fn$ nodes unchanged.
    In the case where $\In < \Fn$, the same analysis from \cref{thm:merge-bw:lower-bound:two} is followed, but the larger number of unchanged nodes will lead to a slightly different inequality.
    Thus, in the case of $\Ik = \Fk$ the lower bound on conversion bandwidth is:
    \[
        \gamma \geq
        \begin{cases}
            0, & \text{if $\In \geq \Fn$}\\
            \alpha\left( \Ik + \Ir \right) \left( 1 - \frac{\Ir}{\Fr}  \right) + (\Fr - \Ir)\alpha, & \text{otherwise.}
        \end{cases}
    \]
    
    Readers familiar with regenerating codes might notice that the above lower bound is equivalent to the lower bound on the repair bandwidth~\cite{dimakis2010network,cadambe2013asymptotic} when $(\Fr - \Ir)$ symbols of an $[\Ik + \Fr, \Ik]$ MDS code are to be repaired with the help of the remaining $(\Ik + \Ir)$ symbols.
    Note that this setting imposes a relaxed requirement of repairing only a specific subset of symbols as compared to regenerating codes which require optimal repair of all nodes.
    Yet, the lower bound remains the same. This is not surprising though, since it has been shown~\cite{shah2012interference} that the regenerating codes lower bound for MDS codes applies even for repair of only a single specific symbol.
\end{remark}
\section{Explicit construction of Bandwidth-optimal MDS convertible codes in\texorpdfstring{\\}{}the merge regime} 
\label{sec:merge-bw:construction}

In this section, we present an explicit construction for \bandwidthoptimal\ \codenames\ in the \combinationregime.
Our construction employs the \piggyfw~\cite{rashmi2017piggybacking}.
Recall from \cref{sec:background:piggyback} that the \piggyfw is a framework for constructing vector codes using an existing code as a base code and adding specially designed functions called piggybacks which impart additional properties to the resulting code.
We use an \accessoptimal \codename to construct the base code and design the piggybacks to help achieve minimum \convbw.
First, in \cref{sec:merge-bw:construction:single}, we describe our construction of \bwoptimal \codenames in the case where we only consider fixed unique values for the final parameters $\Fn$ and $\Fk = \Cs\Ik$.
Then, in \cref{sec:merge-bw:construction:multi}, we show that initial codes built with this construction are not only \bwConvertibleDefault, but also simultaneously bandwidth-optimally convertible for multiple other values of the pair $(\Fn, \Fk)$.
Additionally, we present a construction which given any finite set of possible final parameter values $(\Fn, \Fk)$, constructs an initial $[\In, \Ik]$ code which is simultaneously \bwConvertibleDefault for every $(\Fn, \Fk)$ in that set.

\subsection{Bandwidth-optimal MDS \codenames for fixed final parameters}
\label{sec:merge-bw:construction:single}

The case where $\Fr \geq \Ik$ is trivial, since the default approach to conversion is bandwidth-optimal in this case.
Therefore, in the rest of this section, we only consider $\Fr < \Ik$.
Moreover, in the case where $\Ir \geq \Fr$ (\regimetwo), \accessoptimal \codenames (for which explicit constructions are known) are also \bwoptimal.
Therefore, we focus on the case $\Ir < \Fr$ (\regimeone).

We start by describing the base code used in our construction, followed by the design of piggybacks, and then describe the conversion procedure along with the role of piggybacks during conversion.

\paragraph{Base code for piggybacking}
As the base code for our construction, we use a \textit{punctured} initial code of an \accessoptimal \ParamCombCode{\Ik + \Fr}{\Ik}{\Fn}{\Fk}.
Any \accessoptimal \codename can be used.
However, as mentioned in \cref{sec:background:convertible}, we assume without loss of generality that this \codename is: (1) systematic, (2) linear, and (3) only requires accessing the first $\Fr$ parities from each initial \stripe during \accessoptimal conversion.
We refer to the $[\Ik + \Fr, \Ik]$ initial code of this \accessoptimal \codename as $\Ibase$, to its $[\Fn, \Fk]$ final code as $\Fbase$. 
Let $\puncIbase$ be the punctured version of $\Ibase$ where the last $(\Fr - \Ir)$ parity symbols are punctured.

\begin{figure*}
    \colorlet{color1}{Dark2-D}
    \colorlet{color2}{Dark2-E}
    \colorlet{color3}{Dark2-C}
    \colorlet{color4}{Dark2-B}
    \renewcommand{\arraystretch}{1.3}
    \centering
    \columnwidth=\linewidth
    \newcommand{\shaderow}[1]{\rowcolor{#1!10}}
    \begin{equation*}
        \begin{array}{|c|c|}
            \multicolumn{2}{c}{\text{initial \stripe\ } 1\ (\Ivcode)} \\ \hline
            a_1 & b_1 \\ \hline
            \cvdots & \cvdots \\ \hline
            a_4 & b_4 \\ \hline
            \shaderow{color2}
            \IPP{1}{a}{1} & \IPP{1}{b}{1} \IPPA{2}{a}{1} \\ \hline
        \end{array}
        \quad
        \begin{array}{|c|c|}
            \multicolumn{2}{c}{\text{initial \stripe\ } 2\ (\Ivcode)} \\ \hline
            a_5 & b_5 \\ \hline
            \cvdots & \cvdots \\ \hline
            a_8 & b_8 \\ \hline
            \shaderow{color2}
            \IPP{1}{a}{2} & \IPP{1}{b}{2} \IPPA{2}{a}{2} \\ \hline
        \end{array}
        \quad
        \begin{array}{|c|c|}
            \multicolumn{2}{c}{\text{final \stripe\ } (\Fvcode)} \\ \hline
            a_1 & b_1 \\ \hline
            \cvdots & \cvdots \\ \hline
            a_4 & b_4 \\ \hline
            a_5 & b_5 \\ \hline
            \cvdots & \cvdots \\ \hline
            a_8 & b_8 \\ \hline
            \shaderow{color3}
            \FPP{1}{a} & \FPP{1}{b} \\ \hline
            \shaderow{color3}
            \textcolor{Dark2-B}{\FPP{2}{a}} & \FPP{2}{b} \\ \hline
        \end{array}
    \end{equation*}
    \caption{
        Example of a bandwidth-optimal \ParamCode{5}{4}{10}{8}.
        Each block in this diagram represents a \stripe, where each column corresponds to a distinct coordinate of the $\alpha$-length vector ($\alpha = 2$ in this case), and each row corresponds to a node.
        The shaded rows correspond to retired nodes for the first two blocks (initial \stripes), and new nodes for the third block (final \stripe).
        For the initial \stripes, text color is used emphasize the piggybacks.
        In the final \stripe, text color is used to denote the base code symbol which can be directly computed from the piggybacks.
    }
    \label{fig:merge:construction}
\end{figure*}

\paragraph{Piggyback design}
Now, we describe how to construct the $[\In, \Ik, \alpha]$ initial vector code $\Ivcode$ and the $[\Fn, \Fk, \alpha]$ final vector code $\Fvcode$ that make up the \bwoptimal \ParamCombCodeDefault.

The first step is to choose the value of $\alpha$.
Let us reexamine the lower bound derived in \cref{thm:merge-bw:lower-bound} for $\Ir < \Fr < \Ik$, which is rewritten below in a different form.
\[
    \gamma \geq \Is \left(\Ir \alpha +  \Ik \left( 1 - \frac{\Ir}{\Fr} \right) \alpha  \right) + \Fr\alpha.
\]
We can see that one way to achieve this lower bound would be to download exactly $\beta_1 = \alpha$ \subsyms from each of the $\Ir$ retired nodes in the $\Is$ initial \stripes, and to download $\beta_2 = \left( 1 - \sfrac{\Ir}{\Fr} \right) \alpha$ \subsyms from each of the $\Ik$ unchanged nodes in the $\Is$ initial stripes.
Thus, we choose $\alpha = \Fr$, which is the smallest value that makes $\beta_1$ and $\beta_2$ integers, thus making:
\[
    \beta_1 = \Fr \qquad\text{and}\qquad \beta_2 = (\Fr - \Ir).
\]

The next step is to design the piggybacks.
We first provide the intuition behind the design. Recall from above that we can download $\beta_2 = (\Fr - \Ir)$ \subsyms from each unchanged node and all the $\alpha$ \subsyms from each retired node.
Hence, we can utilize up to $\beta_2 = (\Fr - \Ir)$ coordinates from each of the $\Ir$ parity nodes for piggybacking.
Given that there are precisely $(\Fr - \Ir)$ punctured symbols and $\alpha$ instances of $\puncIbase$, we can store piggybacks corresponding to $\Ir$ instances of each of these punctured symbols.
During conversion, these punctured symbols can be reconstructed and used for constructing the new nodes.

Consider a message $\Msg \in \Field{q}^{\Is\Ik\alpha}$ split into $\Is\alpha$ submessages $\Mcc{s}{j} \in \Field{q}^{\Ik}$, representing the data encoded by instance $j \in [\alpha]$ of the base code in initial \stripe $s \in [\Is]$.
Recall that $\puncIbase$ is systematic by construction.
Therefore, the submessage $\Mcc{s}{j}$ will correspond to the contents of the $j$-th coordinate of the $\Ik$ systematic nodes in initial \stripe $s$.
Let $\Icc{i}{j}(s)$ denote the contents of the $j$-th coordinate of parity symbol $i$ in initial \stripe $s$ under code $\Ivcode$, and $\Fcc{i}{j}$ let denote the same for the single final \stripe encoded under $\Fvcode$.
These are constructed as follows:
\begin{align*}
    \Icc{i}{j}(s) &=
    \begin{cases}
        \Mcc{s}{j} \V{p}^I_{i}, &
        \text{for } s \in [\Is],\; i \in [\Ir],\;  1 \leq j \leq \Ir \\
        \Mcc{s}{j} \V{p}^I_{i} + \Mcc{s}{i} \V{p}^I_{j}, &
        \text{for } s \in [\Is],\; i \in [\Ir],\; \Ir < j \leq \Fr
    \end{cases}
    \\\label{eq:construction:piggy}
    \Fcc{i}{j} &= [\Mcc{1}{j} \cdots \Mcc{\Is}{j}] \V{p}^F_{i}, \qquad \text{for } i \in [\Fr],\; j \in [\Fr],
\end{align*}
where $\V{p}^I_i$ corresponds to the encoding vector of the $i$-th parity of $\Ibase$ and $\V{p}^F_i$ corresponds to the encoding vector of the $i$-th parity of $\Fbase$.
By using the \accessoptimal conversion procedure from the base code, we can compute $ \Fcc{i}{j} = [\Mcc{1}{j} \cdots \Mcc{\Is}{j}] \V{p}^F_{i}$ from $\{\Mcc{s}{j} \V{p}^I_{i} : s \in [\Is]\}$ for all $i \in [\Fr]$ and $j \in [\Fr]$.
Notice that each initial \stripe is independent and encoded in the same way (as required).

This piggybacking design, that of using parity code symbols of the base code as piggybacks, is inspired by one of the piggybacking designs proposed in \cite{rashmi2017piggybacking}, where it is used for efficiently reconstructing failed (parity) code symbols. 

\paragraph{Conversion procedure}
Conversion proceeds as follows:
\begin{enumerate}
    \item
    Download $D = \{\Mcc{s}{j} : s \in [\Is] \text{ and } \Ir < j \leq \Fr\}$, $C_1 = \{\Icc{i}{j}(s) :  s \in [\Is],\ i \in [\Ir], \text{ and } 1 \leq j \leq \Ir\}$, and $C_2 = \{\Icc{i}{j}(s) : s \in [\Is],\ i \in [\Ir], \text{ and } \Ir < j \leq \Fr\}$.
    
    \item
    Recover the piggybacks $C_3 = \{\Mcc{s}{j} \V{p}^I_i : s \in [\Is],\ \Ir < i \leq \Fr, \text{ and } 1 \leq j \leq \Ir\}$ by computing $\Mcc{s}{i} \V{p}^I_j$ from $D$ and obtaining $\Mcc{s}{j} \V{p}^I_i = \Icc{j}{i}(s) - \Mcc{s}{i} \V{p}^I_j$ using $C_2$.
    
    \item
    Compute the remaining base code symbols from the punctured symbols $C_4 = \{\Mcc{s}{i} \V{p}^I_j : s \in [\Is],\ \Ir < i \leq \Fr, \text{ and } \Ir < j \leq \Fr\}$ using $D$.
    
    \item
    Compute the parity nodes of the final \stripe specified by the subsymbols $C_5 = \{\Fcc{i}{j} : i \in [\Fr],\ j \in [\Fr]\}$.
    This is done by using the conversion procedure from the \accessoptimal \codename used as base code to compute $C_5$ from $C_1,\ C_2,\ C_3,$ and $C_4$.
\end{enumerate}

This procedure requires downloading $\beta_1$ \subsyms from each retired node and $\beta_2$ \subsyms from each unchanged node.
Additionally, $\Fr\alpha$ network bandwidth is required to write the new nodes.
Thus, the total network bandwidth of conversion is:
\begin{align*}
    \gamma &=
    \Is\left( \Ir\beta_1 + \Ik\beta_2 \right) + \Fr\alpha \\
    &=
    \Is\left( \Ir\alpha + \Ik\left( 1 - \frac{\Ir}{\Fr} \right) \right) + \Fr\alpha
\end{align*}
which matches \cref{thm:merge-bw:lower-bound}.

Now we show a concrete example of our construction.
\begin{example}[Bandwidth-optimal conversion in the merge regime] 
    \label{ex:piggyback-conversion}
    Suppose we want to construct a \bandwidthoptimal\ \ParamCombCode{5}{4}{10}{8} over a finite field $\Field{q}$ (assume that $q$ is sufficiently large).
    As a base code, we use a punctured \accessoptimal\ \ParamCombCode{6}{4}{10}{8}.
    Thus, $\Ibase$ is a $[6,4]$ code, $\Fbase$ is a $[10, 8]$ code, and $\puncIbase$ is a $[5,4]$ code, all derived from the chosen \accessoptimal \codename as described in the construction above.
    Let \(\V{p}_1^I, \V{p}_2^I \in \Field{q}^{4 \times 1}\) be the encoding vectors for the parities of $\Ibase$, and \(\V{p}_1^F, \V{p}_2^F \in \Field{q}^{8 \times 1}\) be the encoding vector for the parities of $\Fbase$.
    
    Since $\alpha = 2$, we construct a $[5, 4, 2]$ initial vector code $\Ivcode$ and a $[10, 8, 2]$ final vector code $\Fvcode$.
    Let $\V{a} = (a_1, \ldots, a_8)$ and $\V{b} = (b_1, \ldots, b_8)$.
    \Cref{fig:merge:construction} shows the resulting piggybacked codes encoding submessages $\V{a}^{(1)} = (a_1, \ldots, a_4), \V{a}^{(2)} = (a_5, \ldots, a_8), \V{b}^{(1)} = (b_1, \ldots, b_4), \V{b}^{(2)} = (b_5, \ldots, b_8) \in \Field{q}^{1 \times 4}$.
    
    During conversion, only 12 \subsyms need to be downloaded: \(\V{b}^{(1)}, \V{b}^{(2)}\) and all the parity symbols from both \stripes.
    From these subsymbols, we can recover the piggyback terms \(\V{a}^{(1)} \V{p}_2^I\) and \(\V{a}^{(2)} \V{p}_2^I\), and then compute $\V{b}^{(1)} \V{p}_2^I$ and $\V{b}^{(2)} \V{p}_2^I$ in order to reconstruct the second parity symbol of $\Ibase$.
    Finally, we use $\V{a}^{(i)} \V{p}_1^I, \V{b}^{(i)} \V{p}_1^I, \V{a}^{(i)} \V{p}_2^I, \V{b}^{(i)} \V{p}_2^I$ for $i \in \{1,2\}$ with the conversion procedure from the access-optimal \codename to compute the base code symbols $\V{a}\ \V{p}^F_1, \V{a}\ \V{p}^F_2, \V{b}\ \V{p}^F_1$ and $\V{b}\ \V{p}^F_2$ of the new nodes.
    
    The default approach would require one to download 16 \subsyms in total from the initial nodes.
    Both approaches require downloading 4 \subsyms in total from the coordinator node to the new nodes.
    Thus, the proposed construction leads to $20\%$ reduction in \convbw as compared to the default approach of reencoding.
\end{example}

\subsection{Convertible codes with bandwidth-optimal conversion for multiple final parameters}
\label{sec:merge-bw:construction:multi}

In practice, the final parameters $\Fn, \Fk$ might depend on observations made after the initial encoding of the data and hence they may be unknown at code construction time.
In particular, for a \ParamCombCodeDefault in the merge regime this means that the values of $\Is = \Cs$ and $\Fr = (\Fn - \Fk)$ are unknown.

To ameliorate this problem, we now present convertible codes which support \bwoptimal conversion \emph{simultaneously} for multiple possible values of the final parameters.
Recall property (4) of the access-optimal base code which we reviewed in \cref{sec:background:convertible}: when constructed with a given value of $\Is = \Cs$ and $\Fr = r$, the initial $[\In, \Ik]$ code is \accessConvertibleDefault for all $\Fk = \Cs'\Ik$ and $\Fn = \Fk + r'$ such that $1 \leq \Cs' \leq \Cs$ and $1 \leq r' \leq r$.

\subsubsection{Supporting multiple values of \texorpdfstring{$\Is$}{initial lambda}}
The construction from \cref{sec:merge-bw:construction} for some particular value of $\Is = \Cs$, natively supports bandwidth-optimal conversion for any $\Is = \Cs' < \Cs$.
This is a consequence of property (4) above, and can be done easily by considering one or multiple of the initial \stripes as consisting of zeroes only, and ignoring them during conversion.
From \cref{thm:merge-bw:lower-bound}, it is easy to see that this modified conversion procedure achieves optimal network bandwidth cost for the new parameter $\Is = \Cs'$.

\subsubsection{Supporting multiple values of \texorpdfstring{$\Fr$}{final r}}
We break this scenario into two cases:

\textbf{Case 1} (supporting $\Fr \leq \Ir$): due to property (4) above, the base code used in the construction from \cref{sec:merge-bw:construction} supports access-optimal conversion for any value of $\Fr = r$ such that $r \leq \Ir$.
Using this property, one can achieve bandwidth optimality for any $r \leq \Ir$ by simply using the access-optimal conversion on each of the $\alpha$ instances of the base code independently.
The only difference is that some of the instances might have piggybacks, which can be simply ignored.
The final code might still have these piggybacks, however they will still satisfy the property that the piggybacks in instance $i$ ($2 \leq i \leq \alpha$) only depend on data from instances $\{1, \ldots, (i - 1)\}$.
Thus, the final code will have the MDS property and the desired parameters.

\textbf{Case 2} (supporting $\Fr > \Ir$): for supporting multiple values of $\Fr \in \{r_1, r_2, \ldots, r_s\}$ such that $r_i > \Ir$ ($i \in [s]$), we start with an \accessoptimal \codename having $\Fr = \max_i r_i$.
Then we repeat the piggybacking step of the construction (see \cref{sec:merge-bw:construction:single}) for each $r_i$, using the resulting code from step $i$ {(with the punctured symbols from $\Ibase$ added back)} as a base code for step $(i + 1)$.
Therefore, the resulting code will have $\alpha = \prod_{i=1}^s r_i$.
Since the piggybacking step will preserve the MDS property of its base code, and the initial code used in the first piggyback step is MDS, it is clear that the initial code resulting from the last piggybacking step will also be MDS.
Conversion for one of the supported $\Fr = r_i$ is performed as described in \cref{sec:merge-bw:construction:single} on each of the additional instances created by steps $(i + 1), \ldots, s$ (i.e.\ $\prod_{i' = (i + 1)}^s r_{i'}$ in total).
As before, some of these instances {after conversion} will have piggybacks, which can be simply ignored, as the resulting code will continue to have the property that piggybacks from a given instance only depend on data from earlier instances.

\begin{figure*}
    \centering
    \colorlet{color1}{Dark2-D}
    \colorlet{color2}{Dark2-E}
    \colorlet{color3}{Dark2-C}
    \colorlet{color4}{Dark2-B}
    \colorlet{color5}{Dark2-D}
    \arraycolsep=2pt 
    \renewcommand{\arraystretch}{1.3}
    \centering
    \columnwidth=\linewidth
    \newcommand{\shaderow}[1]{\rowcolor{#1!10}}
    \newcommand{\PP}[2]{\V{p}_{#1}^T \V{#2}^{\scaleto{(1)}{5pt}}}
    \newcommand{\PPA}[2]{\textcolor{color4}{{} + \V{p}_{#1}^T \V{#2}^{\scaleto{(1)}{5pt}}}}
    \newcommand{\PPB}[2]{\textcolor{color5}{{} + \V{p}_{#1}^T \V{#2}^{\scaleto{(1)}{5pt}}}}
    \footnotesize
    \begin{equation*}
        \begin{array}{|c|c||c|c||c|c|}
            \multicolumn{6}{c}{\text{initial \stripe\ } 1\ (\Ivcode)} \\ \hline
            a_1 & b_1 &
            c_1 & d_1 &
            e_1 & f_1 \\ \hline
            \cvdots & \cvdots &
            \cvdots & \cvdots &
            \cvdots & \cvdots \\ \hline
            a_4 & b_4 &
            c_4 & d_4 &
            e_4 & f_4 \\ \hline
            \shaderow{color2}
            \IPP{1}{a}{1} &
            \IPP{1}{b}{1} \IPPA{2}{a}{1} &
            \IPP{1}{c}{1} \IPPB{2}{a}{1} &
            \IPP{1}{d}{1} \IPPA{2}{c}{1} \IPPB{2}{b}{1} &
            \IPP{1}{e}{1} \IPPB{3}{a}{1} &
            \IPP{1}{f}{1} \IPPA{2}{e}{1} \IPPB{3}{b}{1} \\ \hline
        \end{array}
    \end{equation*}
    \begin{equation*}
        \begin{array}{|c|c||c|c||c|c|}
            \multicolumn{6}{c}{\text{final \stripe\ } (\Fr = 2)}\\ \hline
            a_1 & b_1 &
            c_1 & d_1 &
            e_1 & f_1 \\ \hline
            \cvdots & \cvdots &
            \cvdots & \cvdots &
            \cvdots & \cvdots \\ \hline
            a_8 & b_8 &
            c_8 & d_8 &
            e_8 & f_8 \\ \hline
            \shaderow{color3}
            \FPP{1}{a} &
            \FPP{1}{b} &
            \FPP{1}{c} \IPPB{2}{a}{1} \IPPB{2}{a}{2} &
            \FPP{1}{d} &
            \FPP{1}{e} \IPPB{3}{a}{1} \IPPB{3}{a}{2} &
            \FPP{1}{f} \\ \hline
            \shaderow{color3}
            \FPPA{2}{a} &
            \FPP{2}{b} &
            \FPPA{2}{c} &
            \FPP{2}{d}&
            \FPPA{2}{e} &
            \FPP{2}{f} \\ \hline
        \end{array}
    \end{equation*}
    \begin{equation*}
        \begin{array}{|c|c||c|c||c|c|}
            \multicolumn{6}{c}{\text{final \stripe\ } (\Fr = 3)}\\ \hline
            a_1 & b_1 &
            c_1 & d_1 &
            e_1 & f_1 \\ \hline
            \cvdots & \cvdots &
            \cvdots & \cvdots &
            \cvdots & \cvdots \\ \hline
            a_8 & b_8 &
            c_8 & d_8 &
            e_8 & f_8 \\ \hline
            \shaderow{color3}
            \FPP{1}{a} &
            \FPP{1}{b} \IPPA{2}{a}{1} \IPPA{2}{a}{2} &
            \FPP{1}{c} &
            \FPP{1}{d} &
            \FPP{1}{e} &
            \FPP{1}{f} \\ \hline
            \shaderow{color3}
            \FPPB{2}{a} &
            \FPPB{2}{b} &
            \FPP{2}{c} &
            \FPP{2}{d} &
            \FPP{2}{e} &
            \FPP{2}{f} \\ \hline
            \shaderow{color3}
            \FPPB{3}{a} &
            \FPPB{3}{b} &
            \FPP{3}{c} &
            \FPP{3}{d} &
            \FPP{3}{e} &
            \FPP{3}{f} \\ \hline
        \end{array}
    \end{equation*}
    \caption{
    Example of a $[5,4]$ MDS code that supports \bwoptimal conversion to multiple final codes (only one initial stripe is shown).
    This code supports \bandwidthoptimal conversion to a $[8 + r, 8]$ MDS code for $r = 1, 2, 3$.
    Text color is used in the initial \stripe to denote piggybacks from different piggybacking steps.
    In the possible final \stripes, text color is used to denote base code symbols which are directly computed from the corresponding piggybacks, or to denote leftover piggybacks which were not used during conversion.
    }
    \label{fig:construction:multi}
\end{figure*}

\begin{example}[\bwoptimal conversion for multiple final parameters]
    In this example, we will extend the \ParamCode{5}{4}{10}{8} from \cref{ex:piggyback-conversion} ($\Fr = 2$) to construct a code which additionally supports \bwoptimal conversion to an $[11, 8]$ MDS code ($\Fr = 3$).
    \Cref{fig:construction:multi} shows one initial \stripe of the new initial vector code, which has $\alpha = 2 \cdot 3 = 6$.
    Here $\V{a}^{(1)} = (a_1, \ldots, a_4)$, $\V{a}^{(2)} = (a_5, \ldots, a_8)\in \Field{q}^{1 \times 4}$, $\V{a} = (a_1, \ldots, a_8) \in \Field{q}^{1 \times 8}$, and similarly for $\V{b}, \ldots, \V{f}$.
    The vectors $\V{p}^I_i \in \Field{q}^{4 \times 1}$ are the encoding vectors of the initial code $\Ibase$ and $\V{p}^F_i \in \Field{q}^{8 \times 1}$ are encoding vectors of the final code $\Fbase$ ($i \in \{1, 2, 3\}$). 
    Since the maximum supported $\Fr$ is 3, we start with an \accessoptimal \ParamCode{7}{4}{11}{8}.
    Thus, $\Ibase$ is a $[7, 4]$ code, $\Fbase$ is a $[11, 8]$ code, and $\puncIbase$ is a $[5, 4]$ code.
    In the first round of piggybacking we consider $\Fr = 2$, which yields the code shown in \cref{ex:piggyback-conversion}.
    In the second round of piggybacking we consider $\Fr = 3$ and piggyback the code resulting from the first round, which yields the code shown in \cref{fig:construction:multi}.
    Conversion for $\Fr = 1$ proceeds by simply downloading the contents of the single parity node and using the \accessoptimal conversion procedure.
    Conversion for $\Fr = 2$ proceeds by treating this code as three instances of the code from \cref{ex:piggyback-conversion} and performing conversion for each one independently.
    Conversion for $\Fr = 3$ proceeds by treating this code as a vector code with $\alpha = 3$ and base field $\Field{q^2}$ (i.e.\ each element is a vector over $\Field{q}$ of length 2).
\end{example}

\begin{remark}[Field size requirement]
    The field size requirement for $\Field{q}$ of the constructions presented in this section is given by the field size requirement of the base code used.
    The currently lowest known field size requirement for an explicit construction of systematic linear \accessoptimal \codenames in the merge regime is given by \cite{maturana2020convertible}.
    In general, this requirement is roughly $q \geq 2^{\varsigma (\In)^3}$.
    When $\Fr \leq \Ir - \Cs + 1$, the requirement can be significantly reduced to $q \geq \Ik\Ir$.
    And when $\Fr \leq \lfloor \sfrac{\Ir}{\Cs} \rfloor$, the requirement can be further reduced to $q \geq \max\{\In, \Fn\}$.
\end{remark}
\section{Bandwidth savings of bandwidth-optimal \codenames}
\label{sec:evaluation}

\begin{figure}
    \centering
    \includegraphics[width=.5\textwidth]{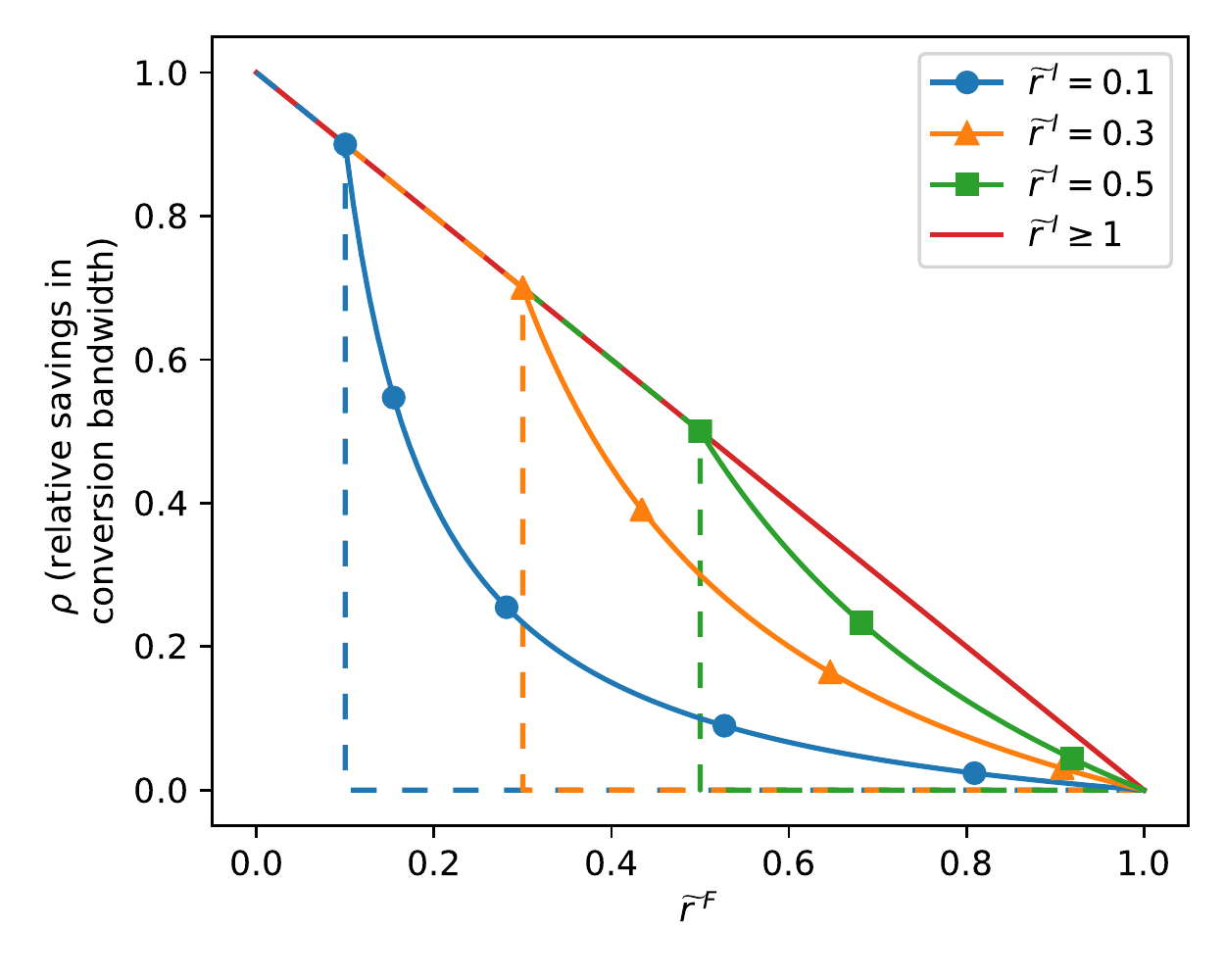}
    \caption{
        Achievable savings in conversion bandwidth by \bwoptimal \codenames in comparison to the default approach to conversion.
        Here \(\Irt = \Ir / \Ik\) and \(\Frt = \Fr / \Ik\) are the initial and final redundancies, divided by the initial code dimension.
        Each curve shows the relative savings for a fixed value of \(\Irt\), as \(\Frt\) varies.
        Solid lines indicate \bwoptimal \codenames, and dashed lines indicate \accessoptimal \codenames.
        Notice that each curve overlaps with the red curve ($\Irt \geq 1$) in the range $\Frt \in (0, \Irt]$.
    }
    \label{fig:savings}
\end{figure}

In this section, we show the amount of savings in bandwidth that can be obtained by using \bwoptimal \codenames in the merge regime, relative to the default approach to conversion.
We present the amount of savings in terms of two ratios:
\[
    \Irt = (\sfrac{\Ir}{\Ik}) \qquad\text{and}\qquad \Frt = (\sfrac{\Fr}{\Ik}),
\]
i.e.\ the initial and final amount of ``redundancy'' relative to the initial dimension of the code.
For simplicity, we only consider the \bwcost of communication from nodes to the coordinator node, since the \bw cost of communication from the coordinator node to new nodes is fixed for stable \codenames (specifically, it is equal to $\alpha\Fr$).
Thus, the \bwcost of the default approach is always $\Cs\Ik\alpha$.
\Cref{fig:savings} shows the relative savings, i.e.\ the ratio between the \bwcost of optimal conversion and the \bwcost of conversion under the default approach, for fixed values of \(\Irt \in (0, \infty)\) and varying \(\Frt \in (0, \infty)\).

Each curve shown in \cref{fig:savings} can be divided into three regions, depending on the value of $\Frt$:
\begin{itemize}
    \item \textbf{Region \(0 < \Frt \leq \Irt\) and $\Frt < 1$:} this implies that \(\Fr \leq \Ir\), so by \cref{thm:merge-bw:lower-bound:one} the conversion bandwidth is \(\Cs\Fr\alpha\), and the relative savings are:
    \[
        \rho = 1 - \frac{\Cs\Fr\alpha}{\Cs\Ik\alpha} = 1 - \Frt.
    \]
    This region corresponds to \regimetwo, and in this region \accessoptimal \codenames are also \bwoptimal.
    This region of the curve is linear, and the amount of savings is not affected by $\Irt$.
    
    \item \textbf{Region \(\Irt < \Frt < 1\):} this implies that \(\Ir \leq \Fr \leq \Ik\), and by \cref{thm:merge-bw:lower-bound:two} the conversion bandwidth is \(\Cs\alpha(\Ir + \Ik(1 - \Ir / \Fr))\), and the relative savings are:
    \[
        \rho = 
        1 - \frac{\Cs\alpha \left( \Ir + \Ik\left( 1 - \frac{\Ir}{\Fr} \right) \right)}{\Cs\Ik\alpha}
        =
        \Irt \left( \frac{1}{\Frt} - 1 \right).
    \]
    This corresponds to \regimeone, where access-optimal convertible codes provide no \convbw savings.
    Thus \bwoptimal \codenames provide substantial savings in \convbw in this regime, compared to \accessoptimal \codenames.
    
    \item \textbf{Region \(\Frt \geq 1\):} this implies that \(\Fr \geq \Ik\) and by \cref{thm:merge-bw:lower-bound:one} a bandwidth of \(\Cs\Ik\alpha\) is required.
    Thus no savings in bandwidth cost are possible in this region.
\end{itemize}

Thus, bandwidth-optimal convertible codes allow for savings in network bandwidth on a much broader region relative to access-optimal convertible codes.

\section{Conclusions and future directions}
\label{sec:conclusion}
In this paper, we initiated a study on the network bandwidth cost of convertible codes.
We showed that the conversion problem can be effectively modeled using network information flow to obtain lower bounds on \convbw.
Using the bounds derived, we showed that for the merge regime \accessoptimal \codenames are also \bwoptimal when $\Ir \geq \Fr$ (\regimeone) and that there is room for reducing \convbw when $\Ir < \Fr$ (\regimetwo).
We proposed an explicit construction which achieves the optimal \convbw for all parameters in the merge regime.
Finally, we showed that \bwoptimal \codenames can achieve substantial savings in \convbw over the default approach and \accessoptimal \codenames.

This work leads to several open questions and challenges.
The main challenge is to extend the \convbw lower bounds and \bwoptimal constructions to encompass all possible parameter values (i.e. the general regime).
Another important challenge is characterizing the optimal value of $\alpha$, especially in the case of multiple possible final parameter values, where $\alpha$ can become very large when using the construction proposed in this paper.
Yet another open challenge is lowering the field size requirement of bandwidth-optimal \codename constructions, as well as deriving lower bounds for their field size requirements.

\bibliographystyle{ieeetr}
\bibliography{main}

\end{document}